\newcommand{\Z}{{\mathds Z}}
\newcommand{\ep}{\epsilon}
\newcommand{\vep}{\varepsilon}
\newcommand{\C}{{\mathcal C}}
\newcommand{\D}{{\mathcal D}}
\newcommand{\V}{{\mathcal V}}
\newcommand{\N}{{\mathcal N}}
\newcommand{\U}{{\mathcal U}}
\newcommand{\T}{{\mathcal T}}
\newcommand{\IN}{{\mathbb N}}
\newcommand{\R}{{\mathbb R}}
\newcommand{\kkk}{K \times K \times K}
\newcommand{\Ht}{{\mathcal H^{(t)}}}
\newcommand{\til}{\tilde}
\renewcommand{\ge}{\gamma}
\newcommand{\defi}{\triangleq}
\newcommand{\bv}{\bf}
\newcommand{\goesto}{\rightarrow}
\newcommand{\one}{\mathds 1}
\newcommand{\st}{:}
\newcounter{constcount}
\newcounter{numcount}
\newcommand{\bvec}[1]{{\bf #1}}
\newcounter{thmcnt}
  \let\Oldsection\section
\renewcommand{\section}{\stepcounter{thmcnt}\Oldsection}
\newenvironment{lemmarep}[1]{\noindent {\bf Lemma #1.}\begin{it}}{\end{it}}
\newtheorem{theorem}{Theorem}
\newtheorem{lemma}{Lemma}
\newtheorem{definition}{Definition}
\newtheorem{claim}{Claim}
\newtheorem{cor}{Corollary}
\newcounter{examplecounter}
\newcommand{\aln}[1]{\begin{align*}#1\end{align*}}
\newcommand{\al}[1]{\begin{align}#1\end{align}}
\def\Item$#1${\item $\displaystyle#1$
   \hfill\refstepcounter{equation}(\theequation)}
\begin{document}


\title{Degrees of Freedom of Two-Hop Wireless Networks:  ``Everyone Gets the Entire Cake''}


\author{Ilan Shomorony and A. Salman Avestimehr%
\thanks{I. Shomorony and A. S. Avestimehr are with the School of Electrical and Computer Engineering, Cornell University, Ithaca, NY 14853 USA (e-mails: is256@cornell.edu, avestimehr@ece.cornell.edu).}%
\thanks{Part of this paper was presented at the Allerton Conference 2012 \cite{dofkkkallerton}.}}

\maketitle

\begin{abstract}
We show that fully connected two-hop wireless networks with $K$ sources, $K$ relays and $K$ destinations have $K$ degrees of freedom both in the case of time-varying channel coefficients and in the case of constant channel coefficients (in which case the result holds for almost all values of constant channel coefficients).
Our main contribution is a new achievability scheme which we call \emph{Aligned Network Diagonalization}.
This scheme allows the data streams transmitted by the sources to undergo a diagonal linear transformation from the sources to the destinations, thus being received free of interference by their intended destination.
In addition, we extend our scheme to multi-hop networks with fully connected hops, and multi-hop networks with MIMO nodes, for which the degrees of freedom are also fully characterized.
\end{abstract}

\section{Introduction}

The conventional design of wireless networks is based on a centralized architecture where a base station, or an access point, directly exchanges data with the end users.
Thus, communication is essentially restricted to the one-to-many (broadcast) and many-to-one (multiple-access) \emph{single-hop} paradigms.
However, as the number of users and the data demand increase, and we move quickly towards the future of wireless networks, 
\emph{multi-hop} and \emph{multi-flow} paradigms are expected to play a very important role by enabling a denser spatial reuse of the spectrum and adaptation to heterogeneous scenarios characterized by user-deployed and user-operated infrastructures.

A major challenge in multi-hop multi-flow wireless networks is that ``interference management'' and ``relaying'' are coupled with each other.
In other words, wireless relay nodes must play a dual role: they serve as intermediate steps for multi-hop communication and as part of the mechanism that allows interference management schemes.
Nonetheless, in the information theory literature, these two tasks have traditionally been addressed separately.
The relaying problem is usually studied in the context of multi-hop single-flow wireless networks (or relay networks). 
For such networks, the capacity is shown in \cite{ADTJ09} to be within a constant gap to the cut-set bound, and several relaying strategies are known to achieve the capacity to within a constant gap
(e.g., quantize-map-forward \cite{ADTJ09}, lattice quantization followed by map-and-forward \cite{SuhasLatticeRelay} and compress-and-forward \cite{NoisyNetworkCoding}).
On the other hand, the problem of interference management is mostly studied in the context of multi-flow single-hop wireless network (or interference channels). 
While the exact capacity and even a constant-gap capacity approximation for interference channels are still unknown (except in the two-user case \cite{ETW,BreslerTse,BreslerTseEuro,KhandaniWIC,GerhardWIC,ElGamalCosta,SatoGInterference,VenuWIC} and some special $K$-user cases \cite{VenuWIC,SridharanKuserStrong,CarleialVS}), the total degrees of freedom of such networks are known to be half of the cut-set bound and are achievable by interference alignment techniques 
\cite{CadambeJafar,MotahariRealInterference}.  

\begin{figure}[ht] 
     \centering
       \includegraphics[height=32mm]{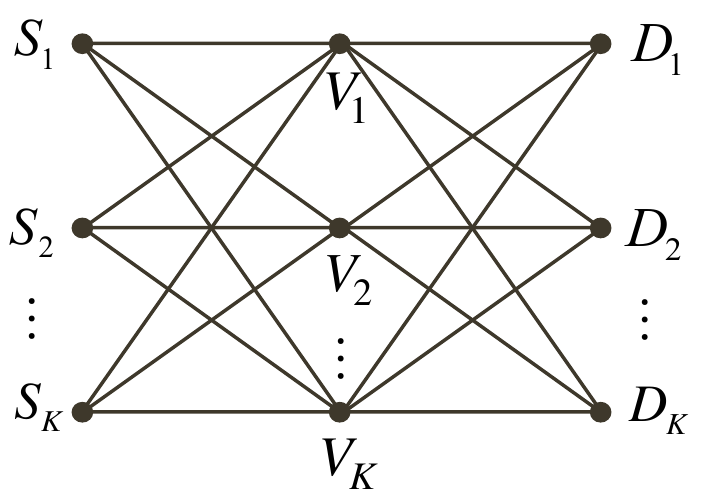} \caption{The $\kkk$ Wireless Network. \label{kkkfig}}
\end{figure}

As we move to the multi-hop multi-flow paradigm, a natural question is whether a decoupled approach for relaying and interference management is optimal.
For example, if we consider a coarse metric such as degrees of freedom, do we need coupled strategies in order to perform optimally?
To make this question clear, consider the $\kkk$ wireless network, shown in Figure \ref{kkkfig}.
One approach that decouples relaying and interference management would basically consist of viewing the $\kkk$ wireless network as the concatenation of two $K$-user interference channels.
For the $K$-user interference channel, 
it is known that, for almost all values of channel gains, $K/2$ degrees of freedom are achievable both when the channel gains are fixed and when they are time-varying \cite{CadambeJafar,MotahariRealInterference}.
Therefore, 
by repeating the scheme described in \cite{CadambeJafar,MotahariRealInterference} at each hop, 
we can also achieve $K/2$ degrees of freedom on the $\kkk$ wireless network for almost all values of the channel gains.
Another similar decoupled approach consists of viewing each hop of the $\kkk$ wireless network as a $K$-user X-channel.
This approach in fact achieves $K^2/(2K-1)$ degrees of freedom \cite{CadambeJafarX}, which is slightly better than $K/2$.
A strategy that couples relaying and interference management can be devised using the result from \cite{RankovWittneben} that shows that, in an $N \times K \times N$ wireless network, a linear scheme can neutralize the interference at all destinations as long as $K \geq N(N-1)+1$.
Thus, it is possible to achieve $\max \{ N : K \geq N(N-1)+1\}$ (roughly $\sqrt K$) degrees of freedom on the $\kkk$ wireless network, by using only a subset of $N$ source-destination pairs.
As depicted in Fig. \ref{approaches}, this coupled scheme only outperforms the Interference Channel and $X$-Channel approaches for $K = 3$.
Another coupled strategy was recently proposed for the case $K = 2$ in \cite{xx}.
The proposed scheme, named Aligned Interference Neutralization, manages to achieve the cut-set bound of two degrees of freedom, and outperforms all decoupled approaches.
However, in general, for $K>2$, all known schemes fall short of the cut-set outer bound of $K$ degrees of freedom. 
This makes the $\kkk$ wireless network a canonical example of a multi-unicast network where the gap between the state-of-the-art inner bounds and the outer bounds is very significant, and an important step in understanding how 
suboptimal decoupled approaches can be in general.


\begin{figure}[ht] 
     \centering
       \includegraphics[height=60mm]{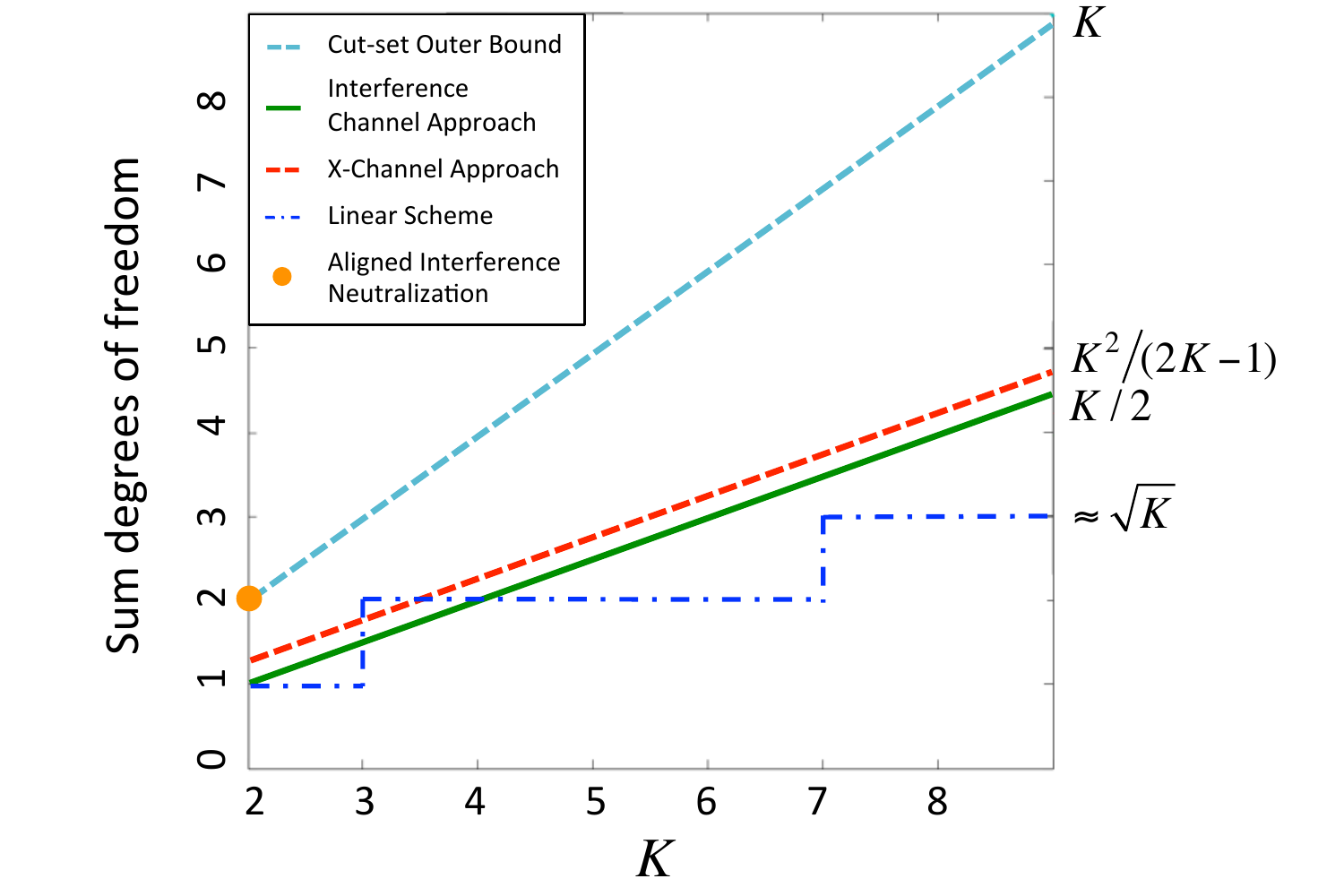} \caption{Degrees of freedom achieved by different schemes on the $\kkk$ wireless network. \label{approaches}}
\end{figure}

In this work, we introduce a new achievability scheme called \emph{Aligned Network Diagonalization} (AND), which handles relaying and interference management in a coupled manner, and manages to close the gap between inner and outer bounds.
In particular, we show that the $\kkk$ wireless network has $K$ degrees of freedom for time-varying channels and constant channels, in which case the result holds for almost all channel gain values.
This result also implies that coupled strategies can significantly outperform a strategy that handles relaying and interference management separately.

The scheme takes two forms, depending on whether the channels are fixed or time-varying. 
In the case of time-varying channels, Aligned Network Diagonalization is in fact a linear scheme.
By viewing multiple network uses as generating a single vector use of the network, we can interpret AND as a solution to a diagonalization problem: is it possible to choose linear transformations for the sources, relays and destinations such that the resulting end-to-end transformation is a diagonal transformation (with non-zero diagonal elements)? 
Our scheme shows that, with probability $1$ over the channel gain realizations, this diagonalization can indeed be obtained.
This way, interference-free channels are effectively created between each source and its corresponding destination, allowing each user to achieve arbitrarily close to one degree of freedom, i.e., each user can get ``the entire cake''.

Similar to the aligned interference neutralization scheme in \cite{xx}, each source starts by encoding its message into several data streams, each one corresponding to a direction in a vector space.
The key idea behind AND, which differentiates it from the scheme in \cite{xx}, is in the goal of the operations performed by the relays.
Each relay receives data streams along several directions, and performs carefully chosen linear operations in order to modify each of these directions.
In particular, the new directions are chosen so that \emph{it looks like the transfer matrix of the first hop is the inverse of the transfer matrix of the second hop}.
This way, by forwarding these effectively received signals, the end-to-end transformation is diagonalized.

In the case of fixed channels, however, using the network multiple times does not provide us with the diversity we need to perform the end-to-end diagonalization.
Therefore, in order to achieve the same $K$ degrees of freedom in this setting, each of the data streams is transmitted along distinct \emph{rational dimensions}, using the real interference alignment framework from \cite{MotahariRealInterference}.
Then, similar ideas to those used in the time-varying case can be used in order to modify the rational dimensions at the relays so that the transfer matrix of the first hop looks like the inverse of the transfer matrix of the second hop.
Once again the result is that the signals transmitted at the sources essentially undergo a diagonal transformation until they reach the destinations.

Several interesting extensions of our main result are possible.
In particular, for multi-hop layered networks with $K$ source-destination pairs, if all hops are fully connected, the number of degrees of freedom is the minimum between $K$ and the minimum number of relays in a single layer.
The case of MIMO sources, relays and destinations is also addressed.
Interestingly, our result implies that, from the point of view of degrees of freedom, the multiple antennas in a single MIMO relay can be equivalentely seen as separate relays, meaning that cooperation between relays in the same layer cannot increase the number of degrees of freedom.

\vspace{3mm}

\noindent \textbf{Related Work:}

Recently, a number of works have focused their attention to networks with $2$ source-destination pairs (two-unicast networks).
For instance, the work in \cite{MohajerZZ} provides constant-gap approximations to the capacity of ZZ and ZS networks.
In \cite{ThejaswiTwoHop}, the focus are $2 \times 2 \times 2$ wireless networks. 
The authors investigated how the common information between the two relays can be exploited in the second hop and proposed
relaying strategies based on distributed MIMO broadcast techniques.
In \cite{xx}, the authors also considered the $2 \times 2 \times 2$ wireless network under a degrees-of-freedom perspective.
By introducing a new scheme called aligned interference neutralization, which applies ideas from interference alignment to a multi-hop scenario, they showed that these networks have two degrees of freedom both in the case of time-varying channels and in the case of fixed channels.
General layered networks with two source-destination pairs were later considered in \cite{dof2unicastfull}.
In this work, two new notions were introduced.
The first one is the idea of network condensation, by which a network with an arbitrary number of layers is reduced to a network with at most four layers with the same degrees of freedom.
The second is a graph theoretic characterization of when the interference in a network is \emph{manageable}, i.e., when all the interference can be simultaneously neutralized.
This allowed the degrees of freedom of two-unicast layered networks with an arbitrary number of layers and arbitrary connectivity between adjacent layers to be completely characterized and shown to only attain the values $1$, $3/2$ and $2$.
In \cite{IssaTwoHop}, the authors revisited the $2 \times 2 \times 2$ setting with constant channel gains but under the constraint that the relays have to performing linear operations.
They showed that the optimal degrees of freedom in this case are $4/3$ and can be achieved by a time-varying linear scheme.

When an arbitrary number of source-destination pairs $K$ is considered, the results are scarcer.
One effort along this direction is found in \cite{SimeoneMesh}, where the authors focus on  two-hop networks   structured as $\kkk$ wireless networks where $K$ is very large (and edge effects can be neglected) and investigate communication strategies based on rate-splitting and successive interference cancellation at each hop.
In \cite{JeonDoFMulti}, networks with $K$ source-destination pairs and $K$ hops with $K$ nodes each were considered under the fast fading scenario.
The authors show that, under some assumptions on the joint distribution of the channel gains, $K$ degrees of freedom can be achieved.
The main idea is to have the relays forward their received signals at carefully chosen times, so that the signals transmitted by the sources undergo an approximately diagonal end-to-end transformation.


\section{Problem Setup} \label{setupsec}

The $\kkk$ wireless network is made up of $K$ sources $S_1,...,S_K$, $K$ relays $V_1,...,V_K$, and $K$ destinations $D_1,...,D_K$, organized as a two-hop layered network, as shown in Figure \ref{kkkfig}. 
We will consider two distinct scenarios.
\begin{itemize}
\item {\bf Time-varying channels:}\; We let the channel gain between source $S_i$ and relay $V_j$ at time $t$ be $h_{S_i,V_j}[t] \in \R$, and the channel gain between relay $V_i$ and destination $D_j$ at time $t$ be $h_{V_i,D_j}[t] \in \R$, for $t=1,2,...$.
We assume that $\{h_{S_i,V_j}[t]\}_{t=1}^\infty$ and $\{h_{V_i,D_j}[t]\}_{t=1}^\infty$ are mutually independent i.i.d~random processes each obeying an absolutely continuous probability distribution with finite second moment.
\item {\bf Constant channels:}\; We assume that $h_{S_i,V_j}[t] = h_{S_i,V_j} \in \R$ and $h_{V_i,D_j}[t]=h_{V_i,D_j} \in \R$ remain the same throughout the entire communication period.
\end{itemize}
In both cases we will assume that instantaneous channel state information is available at all nodes.
To simplify our notation, we let $\Ht = \left\{ h_{S_i,V_j}[\tau], h_{V_i,D_j}[\tau] : i,j \in \{1,...,K\}, 1 \leq \tau \leq t \right\}$ be the channel state information available at time $t$ (which includes all past channel realizations as well as the current one).
%


Communication will take place over a block of $n$ discrete time steps.
At each time $t=1,2,...,n$, each node $v \in \{S_1,...,S_K,V_1,...,V_K\}$ transmits a real-valued signal $X_v[t]$.
The received signal at a relay $V_j$ and at a destination $D_j$ are respectively given by
\al{ \label{receivedsig}
& Y_{V_j}[t] = \sum_{i=1}^K h_{S_i,V_j}[t] X_{S_i}[t] + Z_{V_j}[t]      \text{    and} \\
& Y_{D_j}[t] = \sum_{i=1}^K h_{V_i,D_j}[t] X_{V_i}[t] + Z_{D_j}[t], \label{receivedsig2}
}
where $Z_{V_j}[t]$ and $Z_{D_j}[t]$, for $t=1,2,...,n$, are sequences of i.i.d.~noise terms distributed as $\N(0,\sigma^2)$.
The noise terms are also assumed to be independent from all transmit signals and noise terms at different nodes.

\begin{definition} \label{codedef}
A coding scheme $\C$ with block length $n \in \IN$ and rate tuple $\bvec R = (R_1,...,R_{K}) \in \R^{K}$ for the $\kkk$ wireless network consists of:
\begin{enumerate}[1. ]
\item Encoding functions $f_i^{(t)} : \{1,...,2^{n R_i}\} \times \R^{tK^2} \to \R^{n}$ for each source $S_i$, $i=1,...,K$, and for each time $t=1,...,n$.
For each message $w_i \in \{1,...,2^{n R_i}\}$ and channel state information $\Ht \in \R^{t K^2}$, the codeword $f_i(w_i,\Ht)$ satisfies an average power constraint of $P$.
\item Relaying functions $r_{i}^{(t)} : \R^{t-1} \times \R^{tK^2} \to \R$, for $t=1,...,n$, for each relay $V_i$, $i=1,...,K$, satisfying the average power constraint
\aln{
\frac1n\sum_{t=1}^n \left[ r_{i}^{(t)}(y_1,...,y_{t-1},\Ht)\right]^2 \leq  P,
}
for all $(y_1,...y_{t-1}) \in \R^{t-1}$ and $\Ht \in \R^{tK^2}$.
\item A decoding function $g_i : \R^n \times \R^{n K^2} \to \{1,...,2^{n R_i}\}$ for each destination $D_i$, $i=1,...,K$.
\end{enumerate}
\end{definition}

\begin{definition}
The error probability of a coding scheme $\C$ (as defined in Definition \ref{codedef}), is given by
\aln{
P_{\rm error}(\C) = \Pr \left[ \bigcup_{i=1}^{K} \{ W_i \ne g_i(Y_{D_i}[1],...,Y_{D_i}[n]) \} \right],
}
where we assume that each $W_i$ is chosen independently and uniformly at random from $\{1,...,2^{n R_i}\}$, that source $S_i$ transmits $f_i(W_i)$ over the $n$ time-steps, and relay $V_i$ transmits $r_i^{(t)}(Y_{V_i}[1],...,Y_{V_i}[t-1])$ at time $t=1,...,n$, for $i=1,...,K$.
\end{definition}

\begin{definition} \label{achievedef}
A rate tuple $\bvec R$ is said to be achievable for the $\kkk$ wireless network if there exists a sequence of coding schemes $\C_n$ with rate tuple $\bvec R$ and blocklength $n$, for which $P_{\rm error}(\C_n) \to 0$, as $n\to \infty$.
The sequence of coding schemes $\C_n$, $n=1,2,...$, is then said to achieve rate tuple $\bvec R$.
\end{definition}

\begin{definition}
The capacity region $C(P)$ of a $\kkk$ wireless network is the closure of the set of achievable rate tuples, and the sum-capacity is defined as
\aln{
C_{\Sigma}(P) = \max_{(R_1,...,R_K) \in C(P)} \sum_{i=1}^K R_i.
}
\end{definition}

\begin{definition} \label{dofdefn}
The degrees of freedom of a $\kkk$ wireless network are defined as
\aln{
d_{\Sigma} = \lim_{P\to \infty} \frac{C_\Sigma(P)}{\tfrac12 \log P}.
}
\end{definition}

\section{Main Results}

Our main result settles the question of the number of degrees of freedom of a $\kkk$ wireless network, in both the case of time-varying and constant channel coefficients.

\begin{theorem} \label{thmtv}
For a $\kkk$ wireless network with time-varying channels, $d_\Sigma = K$.
\end{theorem}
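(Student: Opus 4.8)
The plan is to prove the two inequalities $d_\Sigma \le K$ and $d_\Sigma \ge K$ separately. The converse is immediate: each destination $D_i$ is a single-antenna node, so by Fano's inequality $R_i$ cannot exceed the capacity of the MISO channel from the (cooperating) relays to $D_i$, which is $\tfrac12\log(1+cP)$ for some constant $c$ depending on the channel gains; hence $d_i\le 1$ and $d_\Sigma=\sum_i d_i\le K$. (Equivalently, a cut-set bound around the sources, or around the destinations, bounds $C_\Sigma$ by the capacity of a $K\times K$ MIMO channel, which has $K$ degrees of freedom for generic — hence almost all — gains.) So the real content is achievability, $d_\Sigma\ge K$.

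For achievability I would exhibit, for every $\vep>0$, a linear Aligned Network Diagonalization scheme running over a long block of $N$ channel uses that delivers $K(1-\vep)$ degrees of freedom, and let $\vep\to 0$. The key is to view the $N$ channel uses as a single vector use: over the block, the first-hop channel from $S_i$ to $V_j$ is the diagonal matrix $\mathsf{H}^{(1)}_{ij}=\mathrm{diag}(h_{S_i,V_j}[1],\dots,h_{S_i,V_j}[N])$, and similarly $\mathsf{H}^{(2)}_{jk}$ on the second hop, and all of these commute. Source $S_i$ maps its $d$ data streams through a precoding matrix $\mathsf{V}_i\in\R^{N\times d}$; relay $V_j$ applies a linear transformation $\mathsf{T}_j$ to its received vector; destination $D_k$ applies a linear projection $\mathsf{U}_k$. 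The end-to-end map seen by user $k$ is $\mathsf{U}_k\big(\sum_j \mathsf{H}^{(2)}_{jk}\,\mathsf{T}_j\sum_i \mathsf{H}^{(1)}_{ij}\,\mathsf{V}_i\,x_i\big)$, and the goal is to pick the $\mathsf{V}_i,\mathsf{T}_j,\mathsf{U}_k$ so that this equals $\Lambda_k x_k$ for an invertible diagonal $\Lambda_k$ — i.e.\ the network is diagonalized — with $d/N\to 1$, so that each user achieves arbitrarily close to one degree of freedom.

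The core of the construction, and the step I expect to be the main obstacle, is the joint choice of the relay transformations and the source precoders so that ``the first hop looks like the inverse of the second hop.'' Concretely I would build the columns of the $\mathsf{V}_i$ and the relay maps $\mathsf{T}_j$ out of monomials in the channel coefficients $\{h_{S_i,V_j}[t],h_{V_i,D_j}[t]\}$, in the spirit of the asymptotic/real interference alignment framework of \cite{MotahariRealInterference, CadambeJafar}: relay $V_j$ rescales and recombines the directions it receives by appropriate ratios of second-hop to first-hop gains, so that after passing through $\mathsf{H}^{(2)}_{jk}$ each interfering stream from $S_i$ ($i\ne k$) lands in a common low-dimensional ``junk'' subspace at $D_k$, while the intended stream from $S_k$ lands in the complementary subspace. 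One must then (i) check that this alignment can be realized simultaneously at all $K$ destinations and at all relays; (ii) show, via a genericity argument — using that the channel coefficients are drawn from an absolutely continuous distribution, so any fixed nontrivial polynomial identity among them holds with probability zero — that the intended-signal subspace and the interference subspace at each $D_k$ are linearly independent almost surely, so $\mathsf{U}_k$ can zero-force the interference; and (iii) carry out the dimension count, showing the interference occupies a vanishing fraction of the $N$ dimensions as the number of monomials grows, so $d/N\to 1$ per user and $\sum_k R_k/(\tfrac12\log P)\to K$. Two routine but necessary technical points: the causality and per-symbol power constraints on the relays (Definition \ref{codedef}) are accommodated by spreading the block operations over time with only an $o(N)$ rate loss and by a power normalization that does not affect degrees of freedom; and the noise injected at the relays, after linear relay processing and zero-forcing, only suffers bounded amplification and hence costs no degrees of freedom. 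Combining the matching bounds gives $d_\Sigma=K$.
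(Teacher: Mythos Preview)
Your converse and the overall linear-scheme framing are fine and match the paper, and you correctly identify the guiding slogan: the relays should make the first hop ``look like the inverse of the second hop.'' The gap is that your proposed mechanism does not actually realize this. You describe the interfering streams from $S_i$ ($i\ne k$) as landing in a low-dimensional junk subspace at $D_k$, to be zero-forced, with a dimension count showing the junk occupies a vanishing fraction of the $N$ dimensions. That is the Cadambe--Jafar alignment picture; it contradicts your own stated goal of exact diagonalization (end-to-end map $\Lambda_k x_k$, with no interference term at all), and on its own it does not give $d/N\to 1$ per user. ``Rescaling by ratios of second-hop to first-hop gains'' is too vague to close this: you never explain why the cross-terms \emph{vanish} rather than merely align.

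The missing idea --- the heart of the paper's construction --- is a polynomial-substitution step at the relays. All sources use the \emph{same} precoder, whose columns are monomials $T_{\vec s}=\prod_{i,j} h_{S_i,V_j}^{s_{ij}}$ in the \emph{first-hop} gains only (not both hops, as you suggest). After the first hop, each relay's received signal is itself a polynomial in the $h_{S_i,V_j}$'s with the data symbols as coefficients. The relay then applies the single linear map $\tilde{\bf T}\,{\bf T}^{-1}$, whose effect is to re-evaluate this polynomial with every $h_{S_i,V_j}$ replaced by $b_{ij}$, the $(j,i)$ entry of $H_{V,D}^{-1}$. Because the very factorization that expressed the first-hop mixing now holds verbatim in the $b_{ij}$'s (Claim~\ref{claim1}), the vector of relay transmit signals equals $H_{V,D}^{-1}$ times a block-diagonal matrix applied to the data; the second hop cancels $H_{V,D}^{-1}$ exactly, and each destination receives only its own streams --- no junk subspace, no zero-forcing of interference. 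The required invertibility of ${\bf T}$ (of size $(N{+}1)^{K^2}$) and of the destinations' effective channel is supplied by Lemma~\ref{invertiblelemma} (distinct monomials evaluated at i.i.d.\ channel realizations yield a generically nonsingular matrix), and the per-user dimension count is $N^{K^2}/(N{+}1)^{K^2}\to 1$.
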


\begin{theorem} \label{thmc}
For a $\kkk$ wireless network with constant channels, $d_\Sigma = K$ for (Lebesgue) almost all values of the channel gains.
\end{theorem}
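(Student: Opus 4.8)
\noindent \emph{Proof plan.} \,
The converse is immediate and is the same as in the time-varying case (Theorem~\ref{thmtv}): the cut-set bound applied to the cut separating the sources from the rest of the network gives $C_\Sigma(P)\le \tfrac{K}{2}\log P+O(1)$, because the information must pass through the $K$-dimensional Gaussian observation made of the relays' received signals; hence $d_\Sigma\le K$. It therefore suffices to prove that the symmetric corner point $R_1=\cdots=R_K=r$ with $r/(\tfrac12\log P)\to 1$ is achievable for Lebesgue-almost every channel tuple $\{h_{S_i,V_j},h_{V_i,D_j}\}$, i.e.\ that each user can be given one degree of freedom.

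For achievability I would run the scheme of Theorem~\ref{thmtv}, replacing its time-extension diversity by \emph{rational-dimension diversity} in the sense of the real interference alignment framework of~\cite{MotahariRealInterference}. Fix a large $Q\in\IN$ and let $\T$ be the finite set of monomials in the $2K^2$ channel coefficients with every exponent in $\{0,1,\dots,Q\}$. Each source $S_i$ splits its message into independent streams, one per element of a carefully chosen support $\T_i\subseteq\T$, maps each stream to a PAM constellation, and transmits $X_{S_i}=\sum_{t\in\T_i}t\,a_{i,t}$, the constellation sizes and an overall power-normalizing scalar being chosen so that the power constraint of Definition~\ref{codedef} holds. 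By~\eqref{receivedsig}, relay $V_j$ receives a real number superposing the $a_{i,t}$'s along the directions $h_{S_i,V_j}\,t$; the supports are chosen so that, at each relay, the streams \emph{assigned} to that relay are rationally independent while all remaining streams are aligned into a negligible number of directions, so that $V_j$ can decode (up to vanishing error) its assigned streams. Each relay then re-encodes its decoded streams along a \emph{new} set of rational dimensions and forwards them; these new directions, together with the stream-to-relay assignment, are chosen precisely so that the end-to-end map \emph{diagonalizes the network}: the streams of $S_i$ reach $D_i$ along $|\T_i|$ distinct, rationally independent directions, while their contributions at every $D_{i'}$ with $i'\ne i$ are \emph{neutralized}. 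This is the Aligned Network Diagonalization construction: the extra freedom gained by spreading each stream over several rational dimensions, together with the relays' re-encoding, is what compensates for the fact that the relays cannot cooperate, and it plays here the role played in the time-varying scheme by arranging the first hop's transfer to look like the inverse of the second hop's.

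The argument then has three ingredients. (i) \emph{Genericity.} The construction requires a fixed finite list of polynomial expressions in the channel gains to be nonzero: rational independence of the desired directions at each relay and at each destination, invertibility of the per-user diagonal maps, and solvability of the linear equations $\sum_j h_{V_j,D_{i'}}\tilde v_{j,i,t}=0$ defining the re-encoding coefficients. Each of these fails only on the zero set of a nontrivial polynomial, so the union of bad channel tuples is Lebesgue-null. (ii) \emph{Diophantine step}, exactly as in~\cite{MotahariRealInterference}: conditioned on the good event, the Khintchine--Groshev theorem guarantees that, outside a further null set of channel tuples, the minimum distance of each received constellation exceeds the noise level by a factor that is a positive power of $P$, so that the relay and destination decodings all succeed with error probability tending to $0$; since $D_i$ then sees only its own $|\T_i|$ streams, each carrying $\tfrac{1-\ep}{2|\T_i|}\log P+o(\log P)$ bits, every user attains rate $\tfrac{1-\ep}{2}\log P-o(\log P)$, i.e.\ $(1-\ep-o(1))$ of a degree of freedom. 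Running the construction along a sequence with $\ep\to 0$ and $Q\to\infty$ (a countable union of null sets is still null) gives one degree of freedom per user for almost every channel tuple. (iii) \emph{Bookkeeping} of the per-relay power constraints and the two-hop noise accumulation, which only costs constants absorbed into $\ep$.

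The main obstacle is ingredient (i), together with the combinatorial design it rests on: choosing the supports $\T_i$, the assignment of streams to relays, and the re-encoding directions so that interference alignment at the relays (for decodability) and interference neutralization at the destinations hold simultaneously and the end-to-end transformation is \emph{exactly} diagonal, and then verifying that none of the resulting algebraic conditions degenerates identically in the channel gains. Once the construction is fixed, the remainder is the by-now-standard real interference alignment analysis of~\cite{MotahariRealInterference}.
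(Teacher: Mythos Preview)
Your outline correctly identifies the converse, the use of the real interference alignment framework of~\cite{MotahariRealInterference}, and the Khintchine--Groshev step; these match the paper exactly. But the achievability construction you sketch is not the one in the paper, and the description you give is not concrete enough to be a proof.

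Two specific discrepancies. First, in the paper there is \emph{no} ``assignment of streams to relays'' and no alignment of unwanted streams at the relays: every source uses the \emph{same} transmit support, the monomials $T_{\vec s}=\prod_{i,j}h_{S_i,V_j}^{s_{ij}}$ with $\vec s\in\Delta_N=\{0,\dots,N-1\}^{K^2}$ in the first-hop gains only. Relay $V_j$ then sees its received signal as a polynomial in the $h_{S_i,V_j}$'s whose coefficients $u_{j,\vec s}$, $\vec s\in\Delta_{N+1}$, are \emph{specific} linear combinations of all source symbols dictated by the channel structure; it hard-decodes all $(N+1)^{K^2}$ of these integers. Second, the relays do not solve neutralization equations of the form $\sum_j h_{V_j,D_{i'}}\tilde v_{j,i,t}=0$. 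Instead, the relay's re-encoding directions are the monomials $\tilde T_{\vec s}=\prod_{i,j}b_{ij}^{s_{ij}}$ where the $b_{ij}$ are the \emph{entries of $H_{V,D}^{-1}$}. This is exactly the ``make the first hop look like $H_{V,D}^{-1}$'' idea you attribute only to the time-varying scheme: it is also the mechanism here. With this choice, writing the relay transmit vector as in the paper's~(\ref{relaytransmitvector1}) and multiplying by $H_{V,D}$ yields $H_{V,D}H_{V,D}^{-1}=I$ structurally, so each $D_j$ receives $\gamma'\sum_{\vec s\in\Delta_N}\tilde T_{\vec s}\,c_{j,\vec s}$ plus noise, interference-free. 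The only nontrivial genericity check is that the $\tilde T_{\vec s}$ (ratios of polynomials in the $h_{V_i,D_j}$'s) are linearly independent analytic functions, which the paper handles in Lemma~\ref{analyticlemma}.

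The gap in your plan is therefore ingredient~(i): you flag the combinatorial design as the main obstacle but do not supply it, and the design you gesture at (per-source supports $\T_i$, stream-to-relay assignment, neutralization equations) is a different and unspecified scheme whose existence for general $K$ is exactly what is at stake. The paper's point is that the explicit choice $b_{ij}=(H_{V,D}^{-1})_{ij}$ makes the whole construction canonical and reduces the genericity checks to Lemma~\ref{analyticlemma} and the standard monomial independence; once you adopt that choice, the rest of your outline (PAM encoding, Khintchine--Groshev minimum-distance bound, $N^{K^2}/(N+1)^{K^2}\to1$ accounting) goes through verbatim.
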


Since the cut-set outer bound trivially implies that, in both cases, $d_\Sigma \leq K$, we only need to show that $K$ degrees of freedom are achievable.
The achievability scheme we propose for both the time-varying channel case and the constant channel case are based on interference alignment techniques.
Similar to the approach taken in \cite{xx}, in the time-varying case our alignment is performed over \emph{time dimensions}, while in the constant channel case, it is performed over \emph{rational dimensions}.
More precisely, when we have time-varying channels, the alignment is performed in the vector space created by multiple channel uses, using the framework introduced in \cite{CadambeJafar}.
In this case, our construction results in a \emph{linear scheme}, i.e., where relaying functions are restricted to linear transformations.
When the channels are constant, on the other hand, alignment over time dimensions is not feasible, and we instead use the real interference alignment frameworks introduced in \cite{MotahariRealInterference}.
%
%

In both cases, each of the $K$ sources will transmit $L$ data streams, each one along a different transmit dimension (be it time or rational).
These data streams are aligned at the relays, which allows each relay to decode approximately $L$ linear combinations of the data streams which can then be re-modulated using new transmit directions.
These new transmit directions are chosen so that all the intereference is cancelled at each destination, and the $L$ data streams from each source arrive at their intended destination along independent rational dimensions, which allows perfect decoding with high probability.
These operations guarantee that, with small probability of error, the $LK$ data streams chosen at all $K$ sources are mapped to $LK$ received directions at the destinations by a diagonal linear transformation. 
Hence, we call this scheme Aligned Network Diagonalization.


The result in Theorems \ref{thmtv} and \ref{thmc} has important consequences.
Consider a two-hop $K$-unicast wireless network where, instead of having $K$ relays, we have $A$ relays; i.e., a $K \times A \times K$ wireless network.
It is easy to see that the cut-set outer bound states that no more than $\min(K,A)$ degrees of freedom can be achieved.
Now, if $A \geq K$, we can simply ignore $A-K$ of the relays and use aligned network diagonalization to achieve $K$ degrees of freedom. 
Similarly, if $K > A$, we can ignore $K-A$ source-destination pairs, and achieve $A$ degrees of freedom.
A similar idea can be used in a multihop wireless network with $J$ layers, $K$ source-destination pairs and $A_j$ relays in the $j$th layer (hence $A_1 = A_J = K$).
If we call such a network a $K \times A_2 \times ... \times A_{J-1} \times K$ wireless network, we have the following result.

\begin{cor}
For a $K \times A_2 \times ... \times A_{J-1} \times K$ wireless network, $d_\Sigma = \min_{1 \leq j \leq J} A_j$ in the time-varying case and for almost all values of the channel gains in the constant channel case.
\end{cor}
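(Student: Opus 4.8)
The plan is to deduce the corollary from Theorems~\ref{thmtv} and~\ref{thmc} by (i) a cut-set converse, (ii) a reduction to a ``balanced'' subnetwork in which every layer has exactly $A^\star := \min_{1\leq j\leq J} A_j$ nodes, and (iii) a telescoping version of Aligned Network Diagonalization across all $J-1$ relay layers, whose two-hop instance is precisely the theorems already proved.

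For the converse, consider the cut separating layer $j$ from layer $j+1$ (with $A_1=A_J=K$): it is a Gaussian MIMO channel with $A_j$ inputs and $A_{j+1}$ outputs, hence supports at most $\min(A_j,A_{j+1})$ degrees of freedom, and every message must cross it, so the cut-set bound gives $d_\Sigma \leq \min_{1\leq j\leq J} A_j = A^\star$.

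For achievability, first reduce to the balanced case. Fix a layer $j^\star$ attaining the minimum. If $K > A^\star$, keep only $A^\star$ source--destination pairs, silencing the remaining sources and ignoring the remaining destinations; in every intermediate layer with more than $A^\star$ relays, switch off all but $A^\star$ of them. Since the original network is fully connected between consecutive layers, the surviving active subnetwork is a fully connected $J$-layer network with $A^\star$ source--destination pairs and exactly $A^\star$ nodes in every layer, and since deleting nodes does not alter the coefficients on the surviving edges, the ``almost all channel gains'' hypothesis is inherited (the exceptional set in the full parameter space is a cylinder over a measure-zero set, hence measure zero). It therefore suffices to show that a fully connected $A^\star \times A^\star \times \cdots \times A^\star$ ($J$-layer) network has $A^\star$ degrees of freedom. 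Now run AND jointly across all $J-1$ relay layers: viewing a long block of channel uses (time-varying case) or a real-interference-alignment constellation (constant case) as a single vector channel use, each hop contributes a (block-)transfer matrix and the relays in each intermediate layer apply a linear map of our choosing, so the cascade from sources to destinations is a product of the $J$ hop matrices interleaved with the $J-1$ relay maps, together with the source and destination bases. Each source emits $L$ streams along chosen directions; these align, layer by layer, into a controlled number of received dimensions, and each relay layer re-modulates along new directions chosen so that the accumulated transfer of the hops already traversed ``looks like the inverse'' of the transfer of the hops still to come. Propagating this requirement along the chain forces the end-to-end map to be diagonal with nonzero diagonal entries, which yields $A^\star$ interference-free streams per channel use and hence $d_\Sigma \geq A^\star$; the case $J=2$ is exactly Theorems~\ref{thmtv}--\ref{thmc}, which also supply the probability-one (resp. full-measure) genericity arguments that are re-invoked at each of the $J-1$ relaying stages.

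The main obstacle is step (iii) for $J>2$: making the telescoping rigorous. One must choose the stream count $L$ and the time/rational dimension budget so that, simultaneously, the signals reaching each intermediate layer align into few enough dimensions to be re-modulated, the new directions introduced at layer $j$ invert exactly the right partial product of hop matrices, and the finitely many genericity conditions along the whole chain (full rank of the induced submatrices, distinctness of the rational dimensions seen at the destinations) still hold generically. If the two-hop construction is organized so that its direction sets are defined recursively in the number of hops, each of these is a routine extension of the $J=2$ argument; the real work is the bookkeeping of the nested direction sets and the growth of $L$ with $J$.
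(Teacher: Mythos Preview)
Your proposal is correct and follows exactly the route the paper sketches: the cut-set converse and the reduction to a balanced $A^\star \times \cdots \times A^\star$ subnetwork by switching off surplus nodes are precisely what the paper does. For step (iii) the paper itself gives no details beyond the assertion that ``a similar idea can be used,'' so your outline --- together with your candid flagging of the multi-hop bookkeeping as the remaining work --- is already more explicit than the paper's own treatment of the corollary.
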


\section{Achievability Scheme}

In this section we describe the Aligned Network Diagonalization scheme, which achieves $K$ degrees of freedom on a the $\kkk$ wireless network.
First, in Section \ref{overview}, we give a high-level overview of the scheme and describe the intuition behind it.
These ideas are then formalized in Sections \ref{descriptiontv} and \ref{descriptionc}, where we consider, respectively, the time-varying case and the constant channel case, and describe the operations performed by the sources, relays and destinations.

\subsection{Scheme Overview and Intuition} \label{overview}

In order to understand the main idea behind AND, we start by considering a different but related problem.
Suppose we have a two-hop network with $K$ sources, $K$ destinations, and a single MIMO relay with $K$ (full-duplex) antennas.
Equivalently, this setup, illustrated in Fig. \ref{mimo}, can be seen as our $\kkk$ wireless network where the $K$ relay nodes are allowed to collaborate in the computation of their transmit signals.
This new problem is clearly easier than our original problem, in the sense that any scheme for the $\kkk$ wireless network can be used to achieve the same rates on the network with a single MIMO relay node.

\begin{figure*}[ht] 
     \centering
	\includegraphics[width=0.24\linewidth]{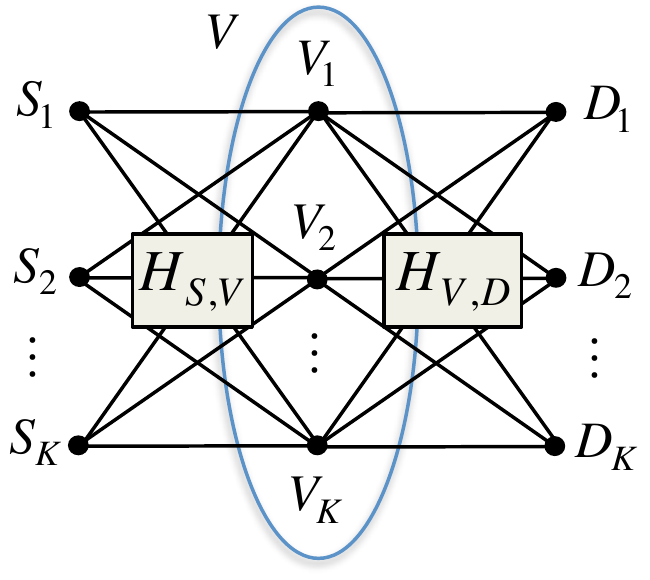} \caption{Network with single MIMO relay node. \label{mimo}}
\end{figure*}

Achieving $K$ degrees of freedom in the setting from Fig. \ref{mimo} is not difficult.
As illustrated in Fig. \ref{mimo2}, a simple scalar linear scheme can be used to \emph{diagonalize} the network.
More precisely, if each source $S_i$ transmits a signal $X_{S_i}[t]$ at time $t$, $i=1,...,K$,
the received signal at the MIMO relay at time $t$ is a length-$K$ vector $\vec Y_{V}[t] = (Y_{V_{1}}[t],...,Y_{V_{K}}[t])^{\dagger}$ given by 
$\vec Y_{V}[t] = H_{S,V} \vec X_S[t]  + \vec Z[t]$, where $\vec X_S[t] = (X_{S_1}[t],...,X_{S_K}[t])^{\dagger}$.
Then, if we assume that the transfer matrices $H_{S,V}$ and $H_{V,D}$ are invertible (which is the case with probability $1$ under the distribution assumptions in Section \ref{setupsec}), the relay can build its transmit signal for time $t+1$ through the linear transformation $\vec X_{V}[t+1] = H_{V,D}^{-1} H_{S,D}^{-1} \vec Y_{V}[t]$. 
If we let $\vec Y_{D}[t+1] = (Y_{D_{1}}[t],...,Y_{D_{K}}[t])^{\dagger}$ be the vector of the received signals at the destinations, it is clear that $\vec Y_D[t+1] = \vec X_{S}[t] + \vec {\tilde Z}[t+1]$, where $\vec  {\tilde Z}[t+1]$ is the vector of effective noises at the destinations.
Therefore, each destination receives its desired source signal plus a Gaussian noise term, meaning that the relay operations essentially diagonalized the end-to-end transfer matrix of the network, since $\vec Y_D[t+1] \approx I \vec X_{S}[t]$, where $I$ is the identity matrix.
It is easy to see that a slight modification of this scheme can guarantee that the transmit power constraints are satisfied at the relays and can thus be used to show that $K$ degrees of freedom are achievable in this setup.

\begin{figure*}[ht] 
     \centering
	\includegraphics[width=0.79\linewidth]{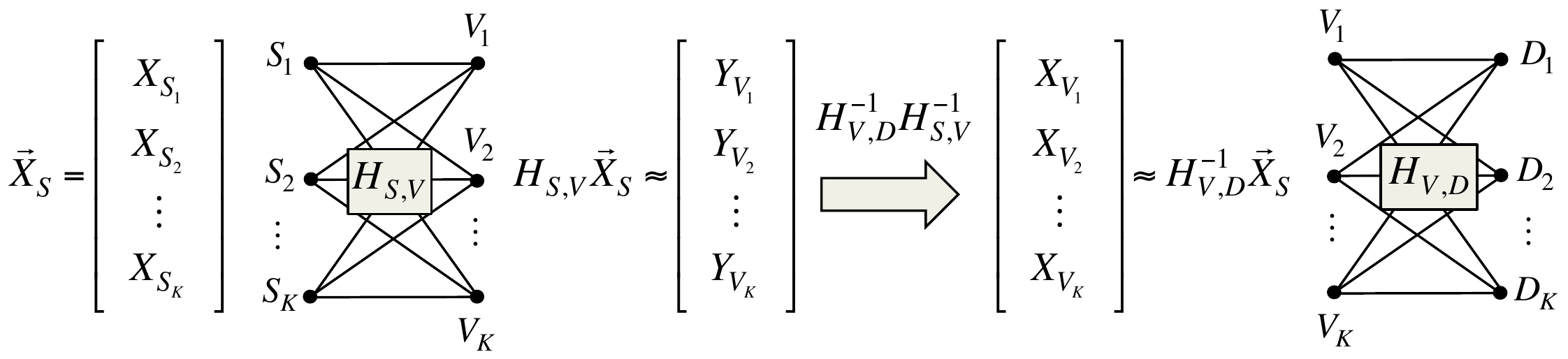} \caption{Achieving $K$ degrees of freedom on the network with sigle MIMO relay. \label{mimo2}}
\end{figure*}

When we move back to our original problem with $K$ single-antenna relay nodes, we notice that the same scheme cannot be implemented because the relays are not allowed to cooperate in order to compute $\vec X_{V}[t] = H_{V,D}^{-1} H_{S,D}^{-1} \vec Y_{V}[t]$.
Therefore, a natural question is whether it possible to apply the linear transformation $H_{V,D}^{-1} H_{S,D}^{-1}$ \emph{distributedly}.
More precisely, can we find functions $f_1,...,f_K$ such that
\al{ \label{dist}
\begin{bmatrix}
f_1(y_1) \\ f_2(y_2) \\ \vdots \\ f_K(y_K)
\end{bmatrix} 
= H_{V,D}^{-1} H_{S,D}^{-1} 
\begin{bmatrix}
y_1 \\ y_2 \\ \vdots \\ y_K
\end{bmatrix} 
}
for all $(y_1,...,y_K) \in \R^K$?
In the case of general transfer matrices $H_{V,D}$ and $H_{S,D}$, the answer is no.
In fact, if $H_{V,D}^{-1} H_{S,D}^{-1}$ is not diagonal, it is easy to see that at least one component of $H_{V,D}^{-1} H_{S,D}^{-1} (y_1,...,y_K)^{\dagger}$ depends on multiple components of $(y_1,...,y_K)$.

Therefore, in order to pursue our objective of diagonalizing the network with distributed relays, we must consider a more general question than the aforementioned one.
In particular, we will reformulate the question of whether the network can be diagonalized by bringing in the channels' time variation, and by including linear transformations at each source and at each destination.
Since our channels are time-varying, we notice that, if each hop of the network is used for $d$ consecutive time steps, we can view both the transmit signals and the received signals of the network as length-$d$ vectors.
The transfer matrix of the first hop is now given by
\aln{
H_{S,V} = \begin{bmatrix}
H_{S_1,V_1} & H_{S_2,V_1} & \cdots & H_{S_K,V_1} \\
H_{S_1,V_2} & H_{S_2,V_2} & \cdots & H_{S_K,V_2} \\
\vdots & \vdots & \ddots & \vdots \\
H_{S_1,V_K} & H_{S_2,V_K} & \cdots & H_{S_K,V_K}
\end{bmatrix}, 
\quad \text{where} \quad H_{S_i,V_j} =
\begin{bmatrix}
h_{S_i,V_j}[0] & 0 & \cdots & 0 \\
0 & h_{S_i,V_j}[1] & \cdots & 0 \\
\vdots & \vdots & \ddots & \vdots \\
0 & 0 & \cdots & h_{S_i,V_j}[d-1]
\end{bmatrix},
}
for $1 \leq i,j \leq K$.
The transfer matrix of the second hop, $H_{V,D}$ can be similarly built.
In this new setting, we have transfer matrices constituted of diagonal blocks, and we could restate the goal in (\ref{dist}) by having each $y_i$ be a length-$d$ column vector.
In this new setting, by assuming that each relay $V_i$ applies a linear transformation to its vector of $d$ received signals, the diagonalization problem becomes the problem of finding block diagonal matrices $A_V$ (with $d\times d$ blocks $A_{V_i}$, for $i=1,...,K$), $A_S$ (with $d\times d'$ blocks $A_{S_i}$, for $i=1,...,K$) and $A_D$ (with $d'\times d$ blocks $A_{D_i}$, for $i=1,...,K$) such that
\al{ \label{diag}
A_D H_{V,D} A_V H_{S,V} A_S = I,
}
where $A_S \in \R^{Kd \times Kd'}$, $A_V \in \R^{Kd \times Kd}$ and $A_D \in \R^{Kd' \times Kd}$ correspond to the linear transformations applied by the sources, relays and destinations.
Notice that the identity matrix $I$ is $Kd' \times Kd'$, and the parameter $d'$ regulates how much information the sources are transmitting.
Our goal is to solve the problem specified by (\ref{diag}) for $d' \leq d$ large enough so that $d'/d \approx 1$.

In this work, our main contribution is to show that the problem in (\ref{diag}), with probability $1$ over the channel realizations, indeed admits a sequence of solutions parameterized by $d$, with the property that $d'/d \to 1$ as $d\to \infty$.
The scheme that provides this solution, which we call Aligned Network Diagonalization, can be roughly described as follows.
The source matrices $A_{S_i}$, $i=1,...,K$, are all chosen to be the same $d \times d'$ matrix $A_{S_0}$, whose columns are all of the form 
\al{ \label{col}
T_{s_{1,1},s_{1,2},...,s_{K,K}} = \prod_{1\leq i,j \leq K} H_{S_i,V_j}^{s_{i,j}} \one,
}
for some nonnegative integers $s_{i,j}$, $1 \leq i,j \leq K$, where $\one$ is a column vector with all entries equal to $1$.
It is then not difficult to see that the result of 
\aln{
H_{S,V} A_S = \begin{bmatrix}
H_{S_1,V_1} & \cdots & H_{S_K,V_1} \\
\vdots  & \ddots & \vdots \\
H_{S_1,V_K} & \cdots & H_{S_K,V_K}
\end{bmatrix}
\begin{bmatrix}
A_{S_0} & \cdots & 0 \\
\vdots  & \ddots & \vdots \\
0 & \cdots & A_{S_0}
\end{bmatrix}
}
is a $Kd \times Kd'$ matrix with $d \times d'$ blocks whose columns are again of the form in (\ref{col}).
The key idea in the AND scheme is in the design of the relaying matrices $A_{V_i}$.
Once again, we will choose a single matrix $A_{V_0}$ and let $A_{V_i} = A_{V_0}$ = $\tilde {\bv T} \, {\bv T}^{-1}$ for $i=1,...,K$, where
$\bv T$ is a matrix whose columns are the vectors of the form (\ref{col}) that appear in any of the blocks in $H_{S,V} A_S$ 
and $\tilde {\bv T}$ is obtained from ${\bv T}$ by replacing each column $T_{s_{1,1},s_{1,2},...,s_{K,K}}$ as given in (\ref{col}) with the column 
\aln{
\tilde T_{s_{1,1},s_{1,2},...,s_{K,K}} = \prod_{1\leq i,j \leq K} B_{i,j}^{s_{i,j}} \one,
}
for diagonal matrices $B_{i,j}$ to be defined.
The key observation is that the result of any vector $T_{s_{1,1},s_{1,2},...,s_{K,K}}$, as given in (\ref{col}), undergoing the transformation $A_{V_0}$ is 
\aln{
\tilde {\bv T} \, {\bv T}^{-1} T_{s_{1,1},s_{1,2},...,s_{K,K}} & = \tilde {\bv T} \, {\bv T}^{-1} \, {\bv T} \; {\bv e}_{s_{1,1},s_{1,2},...,s_{K,K}} = \tilde  T_{s_{1,1},s_{1,2},...,s_{K,K}},
}
where $ {\bv e}_{s_{1,1},s_{1,2},...,s_{K,K}}$ is a standard basis vector with the $1$ at the entry corresponding to the position of the column $T_{s_{1,1},s_{1,2},...,s_{K,K}}$ in $\bv T$.
Therefore, the transformation $A_{V_0}$ applied by each relay can be understood as replacing each ``direction'' $T_{s_{1,1},s_{1,2},...,s_{K,K}}$ with a new direction $\tilde T_{s_{1,1},s_{1,2},...,s_{K,K}}$.
Each matrix $B_{i,j}$ is chosen as \emph{what $T_{s_{1,1},s_{1,2},...,s_{K,K}}$ would have been if $H_{S,V} = H_{V,D}^{-1}$}.
This essentially makes it look like the first hop of the network is $H_{V,D}^{-1}$, rather than $H_{S,V}$. More precisely, we have
$A_V H_{S,V} A_S = H_{V,D}^{-1} \tilde A_S$, where $\tilde A_S$ is obtained by replacing each column $T_{s_{1,1},s_{1,2},...,s_{K,K}}$ in one of the blocks of $A_S$ with $\tilde T_{s_{1,1},s_{1,2},...,s_{K,K}}$.
This reduces the end-to-end transformation in (\ref{diag}) to
\aln{
A_D H_{V,D} A_V H_{S,V} A_S =  A_D H_{V,D} H_{V,D}^{-1} \tilde A_S = A_D \tilde A_S.
}
Finally, since $\tilde A_S$ can be seen to admit a block diagonal left inverse, we can set $A_D$ to be this matrix and obtain our desired end-to-end diagonalization.
In the next section, we describe this scheme in more detail.
In particular, several issues such as power constraints and invertibility of the matrices are properly addressed, and the fact that we can choose $d'$ and $d$ sufficiently large such that $d'/d$ approaches $1$ is proved.

\subsection{Aligned Network Diagonalization for Time-Varying Channels} \label{descriptiontv}


In order to use the Aligned Network Diagonalization in the time-varying scenario, sources and relays will choose their transmit directions based on the channel gain values at each time-step.

\vspace{2mm}

\noindent {\bf Encoding at the sources:} 

\vspace{2mm}

\noindent Each source $S_i$ starts by breaking its message $W_i$ into $L$ submessages.
Each of the submessages will be encoded in a separate data stream, using Gaussian random codebooks with codewords of length $n$ and entries drawn as $\N(0,P)$.
We will let 
\aln{ 
T_{s_{11},s_{12},...,s_{KK}}[t] = \prod_{\substack{1 \leq i \leq K \\ 1 \leq j \leq K} } h_{S_i,V_j}[t]^{s_{ij}},
}
and $\Delta_N = \{0,...,N-1\}^{K^2}$, 
and we define the set of transmit directions for the sources at time $t$ to be
\al{ \label{transdirections0}
\T_N[t] = \left\{  T_{s_{11},s_{12},...,s_{KK}}[t] : (s_{11},s_{12},...,s_{KK}) \in \Delta_N \right\},
}
for some arbitrary $N$.
This selection of directions is similar in flavor to the directions chosen in the Interference Alignment scheme introduced in \cite{CadambeJafar}.
Notice that the number of transmit directions (which is also the number of data streams) is $L = |\T_N[t]| = |\Delta_N| = N^{K^2}$.
To simplify the notation we will let $\vec s$ be a vector of indices $(s_{11},s_{12},...,s_{KK})$ and write $T_{\vec s}$.

Communication will take place over a block of $n d$ time-steps, where $d \defi (N+1)^{K^2}$.
The $(m+1)$th symbol of the codeword associated to the submessage of stream $\vec s \in \Delta_N$ of source $S_i$ will be written as $c_{i,\vec s}[m]$, for $0 \leq m \leq n-1$.
%
At time $t = md + j$ for $m \in \{0,...,n-1\}$ and $j \in \{0,...,d-1\}$, source $S_i$ will thus transmit 
\aln{
X_{S_i}[t] = \gamma \sum_{\vec s \in \Delta_N}  T_{\vec s}[t] \; c_{i,\vec s} [m].
} 
The constant $\gamma$ is chosen so that the transmit power
\al{
E\left[ X_{S_i}[t] ^2\right] & = \gamma^2 E \left[ \left(\sum_{\vec s \in \Delta_N}  T_{\vec s}[t] \; c_{i,\vec s} [m]\right)^2 \right]  \nonumber \\
& = \gamma^2 P \sum_{\vec s \in \Delta_N}   E \left[ T_{\vec s}[t]^2 \right]  \label{powerbound1}
}
does not exceed $P$.
In (\ref{powerbound1}), we used the fact that the $c_{i,\vec s}$ were independently generated.
Notice that $\gamma$ does not depend on $P$ or $t$ and can be chosen strictly positive, since the fact that the channel gains are independent and have finite variances implies $E \left[ T_{\vec s}[t]^2 \right] < \infty$ for all $\vec s$.

\vspace{4mm}


{\noindent \bf Relaying operations:} 

\vspace{2mm}

\noindent The received signal at relay $V_j$ at time $t=md+j$ can be written as
\al{ \label{recnotaligned0}
Y_{V_j}[t] = \gamma \sum_{\vec s \in \Delta_{N}}  T_{\vec s}[t] \left( \sum_{i=1}^K h_{S_i,V_j}[t] c_{i,\vec s}[m] \right) + Z_{V_j}[t].
}
Even though writing the received signal as in (\ref{recnotaligned0}) does not emphasize the alignment that occurs at the relays, it will still be a useful representation of the received signal.
To capture the alignment, we consider rearranging the terms in the summation in (\ref{recnotaligned0}) by viewing it as a polynomial on the variables $h_{S_i,V_j}[t]$, for $1 \leq i,j \leq K$, where the coefficients are given by sums of $c_{i,\vec s}$ terms.
It can then be seen that the actual set of received directions at each relay is a subset of $\T_{N+1}[t]$, 
and the received signal at relay $V_j$ at time $t$ can be alternatively written as
\al{ \label{recaligned0}
Y_{V_j}[t] = \gamma \sum_{\vec s \in \Delta_{N+1}}  T_{\vec s}[t] \; u_{j,\vec s} [m] + Z_{V_j}[t],
}
where $u_{j,(s_{11},s_{12},...,s_{KK})}[m] = \sum_{i=1}^K c_{i,(s_{11},s_{12},...,s_{ij}-1,...,s_{KK})}[m]$ and we define $c_{i,\vec s}[m] = 0$ if any component of $\vec s$ is $-1$ or $N$.
At the end of the $(m+1)$th block of $d$ received signals (i.e., the block consisting of signals received at $t=md,md+1,...,(m+1)d-1$), relay $V_j$ can form a $d$-dimensional vector of received signals
\al{ \label{recalignedvec0}
\vec Y_{V_j}[m] = \begin{bmatrix} Y_{V_j}[md] \\ Y_{V_j}[md+1] \\ \vdots \\ Y_{V_j}[(m+1)d-1] \end{bmatrix}
= \gamma \sum_{\vec s \in \Delta_{N+1}}  
\begin{bmatrix} T_{\vec s}[md] \\ T_{\vec s}[md+1] \\ \vdots \\ T_{\vec s}[(m+1)d-1] \end{bmatrix}  \; 
u_{j,\vec s} [m] + \begin{bmatrix} Z_{V_j}[md] \\ Z_{V_j}[md+1] \\ \vdots \\ Z_{V_j}[(m+1)d-1] \end{bmatrix}
}
for $m \in \{0,...,n-1\}$.
Notice that, for each $\vec s \in \Delta_{N+1}$, $T_{\vec s}[t]$ is a distinct monomial on the variables $h_{S_i,V_j}[t]$ for $i,j \in \{1,...,K\}$.
The following lemma, whose proof is in Appendix \ref{appinvertible}, will thus be useful.


\begin{lemma} \label{invertiblelemma}
Let $\vec p (x_1,...,x_K) = \left[ p_1 (x_1,...,x_K),...,  p_d (x_1,...,x_K) \right]^\dagger$, where each $p_i(x_1,...,x_K)$ is a distinct monomial on the variables $x_1,...,x_K$.
Then, the determinant of the $d \times d$ matrix
\aln{
\left[ \vec p (x_{1,1},...,x_{1,K}), \vec p (x_{2,1},...,x_{2,K}), ..., \vec p (x_{d,1},...,x_{d,K}) \right]
}
is a non-identically zero polynomial on the variables $x_{1,1},...,x_{1,K},...,x_{d,1},...,x_{d,K}$.
\end{lemma}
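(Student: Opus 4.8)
This is a ``generalized Vandermonde'' nonvanishing statement, and the plan is to prove it by induction on $d$. The base case $d=1$ is immediate: the matrix is the $1\times1$ matrix $[p_1(x_{1,1},\dots,x_{1,K})]$, whose determinant is the monomial $p_1$ itself, a nonzero polynomial.

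For the inductive step, I would let $D$ denote the determinant in question, viewed as a polynomial in the $dK$ variables $x_{\ell,k}$ ($1\le\ell\le d$, $1\le k\le K$), and single out the ``last block'' of variables $\mathbf{x}_d=(x_{d,1},\dots,x_{d,K})$. Expanding $D$ along its last column, whose $i$-th entry is $p_i(\mathbf{x}_d)$, yields
\[
D=\sum_{i=1}^{d}(-1)^{i+d}\,p_i(\mathbf{x}_d)\,M_i,
\]
where $M_i$ is the $(d-1)\times(d-1)$ minor obtained by deleting row $i$ and the last column. The key observation is that $M_i$ is exactly the determinant attached, in the sense of the lemma, to the $d-1$ \emph{distinct} monomials $p_1,\dots,\widehat{p_i},\dots,p_d$ evaluated at the $d-1$ points $(x_{\ell,1},\dots,x_{\ell,K})$ for $\ell=1,\dots,d-1$. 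Hence, by the induction hypothesis, each $M_i$ is a nonzero polynomial, and it involves none of the variables of $\mathbf{x}_d$.

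To conclude, I would regard $D$ as a polynomial in $\mathbf{x}_d$ with coefficients in the polynomial ring generated by the remaining variables. Its terms $(-1)^{i+d}p_i(\mathbf{x}_d)M_i$ are supported on the monomials $p_1(\mathbf{x}_d),\dots,p_d(\mathbf{x}_d)$, which are pairwise distinct, while the coefficients $(-1)^{i+d}M_i$ do not involve $\mathbf{x}_d$; therefore distinct terms cannot cancel, and the coefficient of the $\mathbf{x}_d$-monomial $p_i(\mathbf{x}_d)$ in $D$ is precisely $(-1)^{i+d}M_i\neq0$. Thus $D$ is not identically zero, completing the induction.

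The argument is mostly bookkeeping; the only real subtlety --- and the only place the hypothesis is genuinely used --- is this last step, where distinctness of the $p_i$ (hence of the $p_i(\mathbf{x}_d)$) is what forbids cancellation among the cofactor terms. One also has to check that each minor $M_i$ keeps the precise ``distinct monomials at distinct points'' form so that the induction hypothesis applies to it verbatim; this is immediate, since deleting a row just removes one of the monomials and deleting the last column just removes one evaluation point.
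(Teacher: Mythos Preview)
Your proof is correct, but the route differs from the paper's. The paper argues existentially: it builds a witness point column by column, showing inductively that having chosen $\mathbf{x}_1,\dots,\mathbf{x}_{j-1}$ so that the first $j-1$ columns are independent, one can pick $\mathbf{x}_j$ so that the $j$-th column lies outside their span. The step uses that any nontrivial linear combination $\sum_i \alpha_i p_i$ is a nonzero polynomial (because the $p_i$ are distinct monomials), so it fails to vanish at some point. Your argument is instead structural: you induct on $d$ via cofactor expansion along the last column and use distinctness of the $p_i(\mathbf{x}_d)$ to forbid cancellation among the terms $(-1)^{i+d}M_i\,p_i(\mathbf{x}_d)$. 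Both proofs hinge on the same underlying fact---linear independence of distinct monomials---but deploy it differently: the paper uses it to locate a good evaluation point, while you use it to read off a nonzero coefficient of $D$ directly. Your version has the minor advantage of never needing to exhibit a specific point, and it makes the ``generalized Vandermonde'' analogy explicit; the paper's version is closer in spirit to a rank argument and would adapt more readily if one wanted an explicit construction of a full-rank evaluation.
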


Let ${\bv T}[m]$ be the $d \times d$ matrix whose columns are
\al{ \label{vect}
\vec T_{\vec s}[m] = \begin{bmatrix} T_{\vec s}[md] \\ T_{\vec s}[md+1] \\ \vdots \\ T_{\vec s}[(m+1)d-1] \end{bmatrix},
}
for $\vec s \in \Delta_{N+1}$.
From Lemma \ref{invertiblelemma}, we see that $\det {\bf T}[m]$, seen as a polynomial on the variables $h_{S_i,V_j}[t]$ for $i,j \in \{1,...,K\}$ and $t=md,...,(m+1)d-1$, is not identically zero.
Thus, since $h_{S_i,V_j}[t]$ for $i,j \in \{1,...,K\}$ and $t=0,...,nd-1$ are all indepedent and drawn from absolutely continuous distributions, ${\bf T}[m]$ is invertible with probability $1$.
Moreover, if we fix some arbitrary $\ep > 0$, we can find $\delta > 0$ such that $|\det {\bf T}[m]| > \delta$ with probability $1-\ep$. 
At time $t=(m+1)d-1$, the relays will verify whether this is satisfied. 
In case $|\det {\bf T}[m]| \leq \delta$, all the relays will simply remain silent at times $t=(m+1)d,...,(m+2)d-1$.
As we will see later, this is important to guarantee that the entries of ${\bf T}^{-1}$ are not too large, which could lead to a violation of the transmit power constraints at the relays.
Otherwise, if $|\det {\bf T}[m]| > \delta$, in order to build its transmit signals,
each relay $V_j$ will construct the vector of estimates of the $u_{j,\vec s}$s
\al{ \label{uestimate} 
\begin{bmatrix} \hat u_{j,\vec s}[m] \end{bmatrix}_{\vec s \in \Delta_{N+1}}
= \gamma^{-1} {\bv T}[m]^{-1} \vec Y_{V_j}[m]
= \begin{bmatrix} u_{j,\vec s}[m] \end{bmatrix}_{\vec s \in \Delta_{N+1}} + \gamma^{-1} {\bv T}[m]^{-1} \begin{bmatrix} Z_{V_j}[md] \\ Z_{V_j}[md+1] \\ \vdots \\ Z_{V_j}[(m+1)d-1] \end{bmatrix}. 
}
In order to build the transmit signal for time $t = (m+1)d,...,(m+2)d-1$, each relay will compute the determinant of  
\aln{
{\bf H}_{V,D}[t] = 
\begin{bmatrix} h_{V_1,D_1}[t] & ... & h_{V_K,D_1}[t] \\ \vdots & \ddots & \vdots \\ 
h_{V_1,D_K}[t] & ... & h_{V_K,D_K}[t] \end{bmatrix}.
}
Lemma \ref{invertiblelemma} in this case implies that $\det {\bf H}_{V,D}[t]$ is a non-identically zero polynomial on the variables $h_{V_i,D_j}[t]$, $i,j \in \{1,...,K\}$, and we can find $\delta' > 0$ such that $|\det{\bf H}_{V,D}[t]| > \delta'$ for $t=md,...,(m+1)d-1$ with probability $1- \ep'$, for any fixed $\ep' >0$.
Since the event $\{|\det{\bf H}_{V,D}[t]| > \delta' \}$ is independent for each time $t$, we will choose $\ep'$ and the corresponding $\delta'$ small enough so that 
\al{ \label{probnotenough}
\Pr\left[ \Big| \left\{ t : md \leq t \leq (m+1)d-1, \,|\det{\bf H}_{V,D}[t]| > \delta^{\prime} \right\} \Big| \leq |\Delta_N| \right] < \ep,
}
where $\ep$ is the same previously chosen parameter.
%
%
If $|\det{\bf H}_{V,D}[t]| \leq \delta'$, all relays will simply stay silent at time $t$.
Otherwise, after obtaining $\begin{bmatrix} \hat u_{j,\vec s}[m] \end{bmatrix}_{\vec s \in \Delta_{N+1}}$, relay $V_j$ will encode all these $d = |\Delta_{N+1}|$ symbols using new transmit directions.
To describe the new set of transmit directions, we first define
\al{ \label{bdef}
\begin{bmatrix} b_{11}[t] & ... & b_{K1}[t] \\ \vdots & \ddots & \vdots \\ 
b_{1K}[t] & ... & b_{KK}[t] \end{bmatrix} = {\bf H}_{V,D}[t]^{-1}.
}
Next, we let
\al{ \label{directionsfixed2}
\til T_{s_{11},s_{12},...,s_{KK}}[t] = \prod_{\substack{1 \leq i \leq K \\ 1 \leq j \leq K} } b_{ij}[t]^{s_{ij}},
}
and, similar to (\ref{transdirections0}), we can define the set of transmit directions for the relays to be
\al{ \label{tiltdef}
\til \T_{N+1}[t] = \left\{  \til T_{s_{11},s_{12},...,s_{KK}}[t] : (s_{11},s_{12},...,s_{KK}) \in \Delta_{N+1} \right\}.
}
Relay $V_j$ will encode the $\hat u_{j,\vec s}$\,s by transmitting, at time $t = (m+1)d,(m+1)d+1,...,(m+2)d-1$,
\al{ \label{transaligned0} 
X_{V_j}[t] = \gamma' \left( \sum_{\vec s \in \Delta_{N+1}}  \til T_{\vec s}[t] \; \hat u_{j,\vec s} [m] \right)
= \gamma' \left( \sum_{\vec s \in \Delta_{N+1}}  \til T_{\vec s}[t] \; u_{j,\vec s} [m] \right) + \til Z_{V_j}[t],
}
where $\til Z_{V_j}[t]$ is the effective noise term which results from the additive noise terms in the estimates $\hat u_{j,\vec s}$s.
The constant $\gamma'$ is chosen so that the transmit power
\aln{
E\left[ X_{V_j}[t] ^2\right] & = {\gamma'}^2 E \left[ \left(\sum_{\vec s \in \Delta_{N+1}} {\tilde T}_{\vec s}[t] \; u_{j,\vec s} [m]\right)^2 \right]  + 
 E\left[ \tilde Z_{V_j}[t]^2 \right]    \\
& \leq {\gamma'}^2 K P \sum_{\vec s \in \Delta_{N+1}}   E \left[ {\tilde T}_{\vec s}[t]^2 \right]  + E\left[ \tilde Z_{V_j}[t]^2 \right]
}
does not exceed $P$.
Since each $b_{ij}[t]$ can be written as a ratio between a polynomial on the variables $h_{V_i,D_j}[t]$, $i,j \in \{1,...,K\}$ and $\det {\bv H}_{V,D}[t]$, and $|\det {\bv H}_{V,D}[t]| > \delta'$, we see that $E \left[ \tilde  T_{\vec s}[t]^2 \right] < \infty$ for all $\vec s$.
Moreover, the fact that $E\left[ h_{V_i,D_j}[t]^2 \right] < \infty$, for each $i,j \in \{1,...,K\}$, and $|\det {\bv T}[m]| > \delta$ guarantees that the variance of $\til Z_{V_j}[t]$ is finite and independent of $P$.
Thus, for $P$ sufficiently large, $\gamma'$ can be chosen independent of $P$ and $t$.


We then have the following claim.

\begin{claim} \label{claim1}
The transmit signal of relay $V_j$, given in (\ref{transaligned0}), can be re-written as
\al{ \label{transnotaligned0}
X_{V_j}[t] = \gamma' \sum_{\vec s \in \Delta_{N}}  \til T_{\vec s}[t] \left( \sum_{i=1}^K b_{ij}[t] \, c_{i,\vec s} [m] \right) + \tilde Z_{V_j}[t].
}
\end{claim}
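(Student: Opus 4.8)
The plan is to prove Claim \ref{claim1} by a direct re-indexing of the single sum in (\ref{transaligned0}); this is exactly the computation that turned (\ref{recnotaligned0}) into (\ref{recaligned0}), run in the reverse direction, with the entries $b_{ij}[t]$ of ${\bf H}_{V,D}[t]^{-1}$ now playing the role that the channel gains $h_{S_i,V_j}[t]$ played there (just as $\til T_{\vec s}[t]$ is built from the $b_{ij}[t]$ exactly as $T_{\vec s}[t]$ is built from the $h_{S_i,V_j}[t]$).

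First I would substitute into the noiseless part of (\ref{transaligned0}) the definition $u_{j,(s_{11},\dots,s_{KK})}[m]=\sum_{i=1}^K c_{i,(s_{11},\dots,s_{ij}-1,\dots,s_{KK})}[m]$ and interchange the two finite sums, writing
\aln{
\sum_{\vec s \in \Delta_{N+1}} \til T_{\vec s}[t]\, u_{j,\vec s}[m]
&= \sum_{i=1}^K \sum_{\vec s \in \Delta_{N+1}} \Big( \textstyle\prod_{1\le k,l \le K} b_{kl}[t]^{s_{kl}} \Big)\, c_{i,(s_{11},\dots,s_{ij}-1,\dots,s_{KK})}[m].
}
Then, for each fixed $i$, I would reindex the inner sum by letting $\vec s\,'$ denote $\vec s$ with its $(i,j)$-th coordinate lowered by one. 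Using the convention that $c_{i,\vec s\,'}[m]=0$ whenever some component of $\vec s\,'$ equals $-1$ or $N$, only the terms with $\vec s\,'\in\Delta_N$ survive; for each of these $s_{ij}=s'_{ij}+1$ while the other coordinates are unchanged, so that $\prod_{k,l} b_{kl}[t]^{s_{kl}}=b_{ij}[t]\,\til T_{\vec s\,'}[t]$. Collecting terms gives $\sum_{\vec s\in\Delta_N} \til T_{\vec s}[t]\sum_{i=1}^K b_{ij}[t]\,c_{i,\vec s}[m]$, and multiplying by $\gamma'$ and re-attaching the effective-noise term $\til Z_{V_j}[t]$ already isolated on the right-hand side of (\ref{transaligned0}) yields (\ref{transnotaligned0}).

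The computation itself is routine; the only step needing genuine care is the accounting at the boundary of the index set --- confirming that, after discarding the zero terms forced by the padding convention, the correspondence $(i,\vec s)\mapsto(i,\vec s\,')$ is a bijection onto $\{1,\dots,K\}\times\Delta_N$, so that the reindexing neither loses nor duplicates any term. I expect this to be the main (and essentially the only) obstacle, and it is handled exactly as in the earlier passage from (\ref{recnotaligned0}) to (\ref{recaligned0}).
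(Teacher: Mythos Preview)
Your proposal is correct and follows essentially the same approach as the paper's proof: the paper argues by the polynomial-substitution analogy (replacing each variable $h_{S_i,V_j}[t]$ by $b_{ij}[t]$ in the factored polynomial), while you carry out that same reindexing explicitly, including the boundary bookkeeping that the paper leaves implicit.
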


\begin{proof}[Proof]
The main idea is to notice that, just as (\ref{recaligned0}) can be written as (\ref{recnotaligned0}), (\ref{transaligned0}) can be re-written as (\ref{transnotaligned0}).
This can be more easily understood if we think of the (noiseless version of the) received signal in (\ref{recaligned0}) as a polynomial on the variables $h_{S_i,V_j}[t]$, $1 \leq i,j \leq K$. 
When relay $V_j$ estimates each coefficient $u_{j,\vec s}[t]$ of this polynomial and then 
replaces each monomial $T_{\vec s}$ with $\til T_{\vec s}$, it is essentially re-building the same polynomial
with each variable $h_{S_i,V_j}[t]$ replaced by $b_{ij}[t]$.
Therefore, the same factorization used on the polynomial on the $h_{S_i,V_j}[t]$ variables in (\ref{recnotaligned0}) can be used on the polynomial on the $b_{ij}[t]$ variables, as shown in (\ref{transnotaligned0}).
\end{proof}

\vspace{4mm}

{\noindent \bf Decoding at the destinations:} 

\vspace{2mm}

\noindent In order to compute the received signals at the destinations, we first notice that, from (\ref{transnotaligned0}), the vector of the $K$ relay transmit signals at time $t \in \{ (m+1)d,(m+1)d+1,...,(m+2)d-1\}$, can be written as
\al{ \label{relaytransmitvector}
\begin{bmatrix} X_{V_1}[t]  \\ \vdots \\ X_{V_K}[t] \end{bmatrix} = \gamma' \sum_{\vec s \in \Delta_{N}}  \til T_{\vec s}[t]
\begin{bmatrix} b_{11}[t] &  ... & b_{K1}[t] \\  \vdots & \ddots & \vdots \\ 
b_{1K}[t] & ... & b_{KK}[t] \end{bmatrix}
\begin{bmatrix} c_{1,\vec s}[m] \\ \vdots \\ c_{K,\vec s}[m] \end{bmatrix}
+ \begin{bmatrix} \til Z_{V_1}[t] \\ \vdots \\ \til Z_{V_K}[t] \end{bmatrix}.
}
We can then write the vector of the $K$ received signals at the destinations as
\al{
\begin{bmatrix} Y_{D_1}[t] \\ \vdots \\ Y_{D_K}[t] \end{bmatrix} & = 
\begin{bmatrix} h_{V_1,D_1}[t] & ... & h_{V_K,D_1}[t] \\ \vdots & \ddots & \vdots \\ 
h_{V_1,D_K}[t] & ... & h_{V_K,D_K}[t] \end{bmatrix} 
\begin{bmatrix} X_{V_1}[t] \\ \vdots \\ X_{V_K}[t] \end{bmatrix} + \begin{bmatrix} Z_{D_1}[t] \\ \vdots \\ Z_{D_K}[t] \end{bmatrix} \nonumber \\ 
& = 
\begin{bmatrix} b_{11}[t] & ... & b_{K1}[t] \\ \vdots & \ddots & \vdots \\ 
b_{1K}[t] & ... & b_{KK}[t] \end{bmatrix}^{-1}
\begin{bmatrix} X_{V_1}[t] \\ \vdots \\ X_{V_K}[t] \end{bmatrix} + \begin{bmatrix} Z_{D_1}[t] \\ \vdots \\ Z_{D_K}[t] \end{bmatrix} \nonumber \\
& = \gamma' \sum_{\vec s \in \Delta_{N}}  \til T_{\vec s}[t]
\begin{bmatrix} c_{1,\vec s}[m] \\ \vdots \\ c_{K,\vec s}[m] \end{bmatrix} + \underbrace{\begin{bmatrix} b_{11}[t] & ... & b_{K1}[t] \\ \vdots & \ddots & \vdots \\ 
b_{1K}[t] & ... & b_{KK}[t] \end{bmatrix}^{-1} \begin{bmatrix} \til Z_{V_1}[t] \\ \vdots \\ \til Z_{V_K}[t] \end{bmatrix} +\begin{bmatrix} Z_{D_1}[t] \\ \vdots \\ Z_{D_K}[t] \end{bmatrix}}_{\left[\til Z_{D_1}[t] \; ... \; \til Z_{D_K}[t] \right]^{\dagger}}. \label{matrixrec0}
}
Thus, the received signal at destination $D_j$ at time $t=(m+1)d,(m+1)d+1,...,(m+2)d-1$ is simply given by
\al{ \label{destreceivedscalar0}
Y_{D_j}[t] = \gamma' \sum_{\vec s \in \Delta_{N}}  \til T_{\vec s}[t] \; c_{j,\vec s}[m] + \tilde Z_{D_j}[t],
}
and we see that all the interference has been cancelled, and destination $D_j$ receives only the data streams originated at source $S_j$.
Moreover, the effective noise $\til Z_{D_j}[t]$ has a finite variance that is independent of $P$.

Destination $D_j$ will use decoding operations similar to those used at the relays.
The block of the $d$ signals received at times $t=(m+1)d,(m+1)d+1,...,(m+2)d-1$ can be used to form a length-$d$ vector
\al{ \label{recalignedvec}
\vec Y_{D_j}[m+1] = \begin{bmatrix} Y_{D_j}[(m+1)d] \\ Y_{D_j}[(m+1)d+1] \\ \vdots \\ Y_{D_j}[(m+2)d-1] \end{bmatrix}.
}
Notice that, in case $|\det {\bf T}[m]| \leq \delta$, these received signals will contain just noise, since the relays stayed silent in times $t=(m+1)d,(m+1)d+1,...,(m+2)d-1$.
Moreover, at any time $t \in \{(m+1)d,(m+1)d+1,...,(m+2)d-1\}$ for which $|\det{\bf H}_{V,D}[t]| \leq \delta'$, the corresponding entry of $\vec Y_{D_j}[m+1]$ will contain only noise.
Notice that, from (\ref{probnotenough}), this will be the case of less than $d - |\Delta_N|$ of the entries, with probability at least $1- \ep$.
Thus, with probability at least $1-\ep$, destination $D_j$ can let $t_1,..., t_{|\Delta_N|}$ be the first $|\Delta_N|$ values of $t \in \{(m+1)d,...,(m+2)d-1\}$ for which $|\det{\bf H}_{V,D}[t]| > \delta'$, and, from (\ref{destreceivedscalar0}), the resulting received signals satisfy
\al{ \label{recalignedvec}
\begin{bmatrix} Y_{D_j}[t_1] \\ Y_{D_j}[t_2] \\ \vdots \\ Y_{D_j}[t_{|\Delta_N|}] \end{bmatrix}
= \gamma' \sum_{\vec s \in \Delta_{N}}  
\begin{bmatrix} \til T_{\vec s}[t_1] \\ \til T_{\vec s}[t_2] \\ \vdots \\ \til T_{\vec s}[t_{|\Delta_N|}] \end{bmatrix}  \; 
c_{j,\vec s} [m] + \begin{bmatrix} \til Z_{D_j}[t_1] \\ \til Z_{D_j}[t_2] \\ \vdots \\ \til Z_{D_j}[t_{|\Delta_N|}] \end{bmatrix}.
}
%
%
%
%
%
We will let
$
\vec {\til T}_{\vec s}[m+1] = \begin{bmatrix} \til T_{\vec s}[t_1], \til T_{\vec s}[t_2],...,\til T_{\vec s}[t_{|\Delta_N|}] \end{bmatrix}^\dagger,
$
for each $\vec s \in \Delta_N$.
Notice that, for each $\vec s \in \Delta_{N}$, $\til T_{\vec s}[t]$ is a distinct monomial on the variables $h_{V_i,D_j}[t]$ for $i,j \in \{1,...,K\}$.
We will then let $\bv {\til T} [m]$ be the $|\Delta_N| \times |\Delta_N|$ (i.e., $N^{K^2} \times N^{K^2}$) matrix whose columns are $\vec {\til T}_{\vec s}[m+1]$,
for $\vec s \in \Delta_{N}$.
The remaining $d - |\Delta_N|$ received signals 
are simply discarded by the destinations.
Lemma \ref{invertiblelemma} now implies that $\det \bv {\til T} [m]$ is a non-identically zero polynomial on the variables $h_{V_i,D_j}$, $i,j \in \{1,...,K\}$, and ${\bf \til T} [m]$ is invertible with probability $1$.
Moreover, one can find $\delta''>0$ such that, with probability $1-\ep$, $|\det {\bf \til T} [m]| > \delta''$.

At time $t = (m+2)d-1$, destination $D_j$ will construct a length-$|\Delta_N|$ vector of effective outputs as follows.
If $|\det {\bf \til T} [m]| \leq \delta''$, $|\det {\bf T} [m]| \leq \delta$ or if there are more than $d-|\Delta_N|$ times $t \in \{(m+1)d,(m+1)d+1,...,(m+2)d-1\}$ for which $|\det{\bf H}_{V,D}[t]| \leq \delta'$, it simply outputs $[ \vep, ..., \vep ]$, where $\vep$ simbolizes an erasure.
Since each of these three events occurs with probability at most $\ep$, their union occurs with probability at most $3 \ep$.
If none of these events occurs,
%
%
%
%
%
destination $D_j$ will output the vector of estimates of the $c_{j,\vec s}$s
\al{ 
\begin{bmatrix} \hat c_{j,\vec s}[m] \end{bmatrix}_{\vec s \in \Delta_{N}}
& = \frac{1}{\gamma'} {\bf \til T}[m]^{-1} \begin{bmatrix} Y_{D_j}[t_1] \\ Y_{D_j}[t_2] \\ \vdots \\ Y_{D_j}[t_{|\Delta_N|}] \end{bmatrix} \nonumber \\
& = \begin{bmatrix} c_{j,\vec s}[m] \end{bmatrix}_{\vec s \in \Delta_{N}} + \underbrace{\frac{1}{\gamma'} {\bf \til T}[m]^{-1} \begin{bmatrix} \til Z_{V_j}[t_1] \\ \til Z_{V_j}[t_2] \\ \vdots \\ \til Z_{V_j}[t_{|\Delta_N|}] \end{bmatrix}}_{\left[ \tilde{\tilde{Z}}_{j,\vec s}[m]\right]_{\vec s \in \Delta_N}}. \label{cestimate}
}
Notice that $|\det {\bf \til T} [m]| > \delta''$ implies that the entries of ${\bf \til T} [m]^{-1}$ have a finite variance, which in turn implies that the resulting additive noise vector $\left[ \tilde{\tilde{Z}}_{j,\vec s}[m]\right]_{\vec s \in \Delta_N}$ has a finite covariance matrix, with entries that are independent of $P$.
Destination $D_j$ will then view each entry of its output vector as the output of a separate channel, with input $c_{j,\vec s}[m]$ and output $c_{j,\vec s}[m] + \tilde{\tilde{Z}}_{j,\vec s}[m]$ with probability $1-q$ and $\vep$ with probability $q$, where $q \leq 3 \ep$.
Therefore, we essentially create $N^{K^2}$ parallel AWGN channels with erasure probability at most $3 \ep$.
The fact that the additive noises are correlated is irrelevant (in fact it can only improve the achievable rates), and it is clear that we can achieve $1-3 \ep$ degrees of freedom in each of these effective channels.
Since we need $d$ time-steps to transmit one symbol in each of these channels, we achieve a total of
\aln{
(1-3\ep)\frac{N^{K^2}}{d} = (1-3\ep)\frac{N^{K^2}}{(N+1)^{K^2}} = (1-3\ep)\left(\frac{N}{N+1}\right)^{K^2},
}
for arbitrarily chosen $N$ and $\ep$.
Thus, by choosing $N$ large and $\ep > 0$ small, each user can achieve arbitrarily close to one degree of freedom.

%
%

%

\subsection{Aligned Network Diagonalization Scheme for Constant Channels} \label{descriptionc}

In the case of constant channel gains, the AND scheme presented in Section \ref{descriptiontv} does not work.
The lack of time diversity makes the entries in the vector $\vec{T}_{\vec s}[m]$, given in (\ref{vect})), be all equal, and ${\bv T}[m]$ is not invertible (as its rank is one).
Therefore, in order to achieve the $K$ degrees of freedom with constant channels, we must perform the alignment operations of AND not over time dimensions, but over rational dimensions, in the spirit of \cite{MotahariRealInterference}. 

As in Section \ref{descriptiontv}, in this section, we describe the scheme by first considering the encoding at the sources, followed by the relaying operations and the decoding operation.
Then, in Section \ref{analysissec}, we present a performance analysis on the scheme, where we formally prove that it achieves arbitrarily close to $K$ degrees of freedom for almost all values of the channel gains.

\vspace{2mm}

\noindent {\bf Encoding at the sources:} 

\vspace{2mm}

Each source $S_i$ again starts by breaking its message $W_i$ into $L$ submessages.
Each of the submessages will be encoded in a separate data stream, using a single codebook with codewords of length $n$, 
obtained by uniform i.i.d.~sampling of the set
\al{
\U = \Z \cap \left[ - P^{\frac{1-\ep}{2(d+\ep)}} , P^{\frac{1-\ep}{2(d+\ep)}} \right], \label{codeset}
}
for a small $\ep > 0$, 
and $d = (N+1)^{K^2}$. 
The rate of this code, i.e., the number of codewords, will be determined later. 
Notice that $d$ can be thought of as a parameter which sets the number of degrees of freedom given to each stream to be $(1-\ep)/(d+\ep) \approx 1/d$.
The set of transmit directions $\T_N$ is defined as in (\ref{transdirections0}), the only difference being that we drop the time index $t$, since the channels are constant.
We again let $c_{i,\vec s}[m]$, for $0 \leq m \leq n-1$, represent the $(m+1)$th symbol of the codeword associated to the submessage of stream $\vec s \in \Delta_N$.
At time $t \in \{1,...,n\}$, source $S_i$ will thus transmit 
\aln{
X_{S_i}[t] = \ge \sum_{\vec s \in \Delta_N}  T_{\vec s} \; c_{i,\vec s} [t] 
} 
where $\ge = \beta P^{\frac{d-1+2\ep}{2(d+\ep)}}$, for a contant $\beta$ to be determined.
Since the maximum power of a transmit signal from $S_i$ can be loosely upper bounded by
\aln{
 \beta^2 P^{\frac{d-1+2\ep}{d+\ep}} \left( \sum_{\vec s \in \Delta_N} |T_{\vec s}| \right)^2 \hspace{-1.4mm} \gamma^2 P^{\frac{1-\ep}{d+\ep}} & = \beta^2 \hspace{-1mm}  \left( \sum_{\vec s \in \Delta_N} |T_{\vec s}| \right)^2 \gamma^2 P,
}
for any value of $\gamma$ and $N$, we can choose the constant $\beta$ such that the maximum transmit power at the sources is no more than $P$.

\vspace{4mm}

{\noindent \bf Relaying operations:} 

\vspace{2mm}

\noindent Just as in Section \ref{descriptiontv}, the received signals are given by (\ref{recnotaligned0}) and (\ref{recaligned0}), which, in the case of constant channels become respectively
\al{ 
& Y_{V_j}[t] = \ge \sum_{\vec s \in \Delta_{N}}  T_{\vec s} \left( \sum_{i=1}^K h_{S_i,V_j} c_{i,\vec s}[t] \right) + Z_{V_j}[t]. \label{recnotaligned} \\
& Y_{V_j}[t] = \ge \sum_{\vec s \in \Delta_{N+1}}  T_{\vec s} \; u_{j,\vec s} [t] + Z_{V_j}[t]. \label{recaligned}
}
However, since in this case the code symbols $c_{i,\vec s}$, and consequently the $u_{j,\vec s}$s, are integers, it makes sense to consider the  (noiseless) received constellation at each relay, given by
\al{ 
\V = & \left\{ \ge \sum_{\vec s \in \Delta_{N+1}}  T_{\vec s} \; u_{\vec s} : u_{\vec s} \in 
\Z \cap \left[ -K \gamma P^{\frac{1-\ep}{2(d+\ep)}} , K \gamma P^{\frac{1-\ep}{2(d+\ep)}} \right], \forall \; \vec s \in \Delta_{N+1} \right\}. \label{relayconstellation}
}
Each relay $V_j$ will map its received signal $Y_{V_j}[t]$ to the nearest point in $\V$.
This point can then be used to obtain the integers $u_{j,\vec s}$, for $\vec s \in \Delta_{N+1}$, due to the following claim (which is later proven in Section \ref{analysissec}).
\begin{claim} \label{claim1b}
There exists a one-to-one map between points $v \in \V$ and tuples of integers $(u_{\vec s} : \vec s \in \Delta_{N+1})$ with entries in $\Z \cap \left[ -K  P^{\frac{1-\ep}{2(d+\ep)}} , K  P^{\frac{1-\ep}{2(d+\ep)}} \right]$ such that $v = \ge \sum_{\vec s \in \Delta_{N+1}}  T_{\vec s} \; u_{\vec s}$.
\end{claim}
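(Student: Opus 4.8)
The plan is to observe that, since $\V$ is by definition the image of the map $(u_{\vec s})_{\vec s \in \Delta_{N+1}} \mapsto \ge \sum_{\vec s \in \Delta_{N+1}} T_{\vec s}\, u_{\vec s}$ restricted to the stated set of bounded integer tuples, surjectivity onto $\V$ is automatic, so the entire content of the claim is the \emph{injectivity} of this map. This injectivity is in turn a genericity statement about the monomials $T_{\vec s} = \prod_{1\le i,j\le K} h_{S_i,V_j}^{s_{ij}}$: I will show that, for Lebesgue-almost all values of the channel gains $\{h_{S_i,V_j}\}$, the family $\{T_{\vec s} : \vec s \in \Delta_{N+1}\}$ is linearly independent over $\mathbb{Q}$, which (after clearing denominators) is equivalent to saying that no nonzero tuple of integers $(w_{\vec s})_{\vec s \in \Delta_{N+1}}$ satisfies $\sum_{\vec s} w_{\vec s} T_{\vec s} = 0$. (The same fact is obtainable from the real-interference-alignment machinery of \cite{MotahariRealInterference}; the self-contained argument below is included for completeness.)

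Granting this, suppose two tuples $(u_{\vec s})$ and $(u'_{\vec s})$ with entries in $\Z \cap [-KP^{\frac{1-\ep}{2(d+\ep)}}, KP^{\frac{1-\ep}{2(d+\ep)}}]$ map to the same point of $\V$. Dividing by $\ge \ne 0$ and setting $w_{\vec s} = u_{\vec s} - u'_{\vec s} \in \Z$ gives $\sum_{\vec s \in \Delta_{N+1}} w_{\vec s} T_{\vec s} = 0$, and $\mathbb{Q}$-linear independence forces $w_{\vec s} = 0$ for every $\vec s$, i.e.\ $(u_{\vec s}) = (u'_{\vec s})$, which is exactly the desired one-to-one correspondence. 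To prove the linear independence, fix any nonzero integer tuple $(w_{\vec s})$ and form the polynomial $Q_w(x) = \sum_{\vec s \in \Delta_{N+1}} w_{\vec s} \prod_{1\le i,j\le K} x_{ij}^{s_{ij}}$ in the $K^2$ indeterminates $x_{ij}$. Since distinct elements of $\Delta_{N+1}$ are by construction distinct exponent vectors, the monomials occurring in $Q_w$ are pairwise distinct, so $Q_w$ is not the zero polynomial; hence its real zero set $\{x \in \R^{K^2} : Q_w(x) = 0\}$ has Lebesgue measure zero. Because $\sum_{\vec s} w_{\vec s} T_{\vec s}$ is precisely $Q_w$ evaluated at $(h_{S_i,V_j})$, the set of channel realizations on which the tuple $(w_{\vec s})$ witnesses a dependence lies in this null set; taking the union over all $(w_{\vec s}) \in \Z^{|\Delta_{N+1}|}\setminus\{0\}$, a \emph{countable} union of null sets, yields a measure-zero exceptional set of channel gains, off which the family $\{T_{\vec s}\}$ is $\mathbb{Q}$-linearly independent.

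The argument is essentially routine, so there is no serious obstacle; the one place that requires a little care is the quantifier over $P$. The coefficient range in the statement grows with $P$, so at a fixed power level only finitely many tuples $(w_{\vec s})$ are relevant, but to obtain a single exceptional set of channel gains that works \emph{for all} $P$ simultaneously — as needed downstream for the degrees-of-freedom conclusion — one must bound the bad set by the union over \emph{all} nonzero integer tuples and invoke countable additivity of Lebesgue measure, rather than treating each $P$ separately.
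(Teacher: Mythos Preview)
Your proof is correct and is a genuinely more elementary route than the paper's. The paper does not prove Claim~\ref{claim1b} directly; instead it invokes Theorem~5 of \cite{MotahariRealInterference} to obtain a \emph{quantitative} lower bound on the minimum distance of the constellation $\V$ (namely $d_{\min} > \text{const}\cdot P^{\ep/2}$), and then observes that a strictly positive minimum distance forces the map to be one-to-one. Your argument bypasses that machinery entirely: you reduce injectivity to $\mathbb{Q}$-linear independence of the monomials $\{T_{\vec s}\}$ and establish the latter by the standard ``countable union of polynomial zero sets has measure zero'' argument. This is self-contained and perfectly adequate for the claim as stated.

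The trade-off is that the paper's approach kills two birds with one stone: the same invocation of \cite{MotahariRealInterference} that yields injectivity also supplies the minimum-distance growth in $P$ that is essential for bounding the hard-decision error probability at the relays (equation~(\ref{errorprob})). Your argument gives only $d_{\min}>0$ with no rate, so if you were writing the full performance analysis you would still need to appeal to \cite{MotahariRealInterference} (or a Khintchine--Groshev-type estimate) afterward. Your remark about handling the quantifier over $P$ via the countable union over all of $\Z^{|\Delta_{N+1}|}\setminus\{0\}$ is exactly right and is a detail the paper's route sidesteps, since its exceptional set comes directly from the Diophantine statement in \cite{MotahariRealInterference} and is already uniform in the coefficient range.
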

After decoding $u_{j,\vec s}$, for $\vec s \in \Delta_{N+1}$, using this one-to-one map, relay $V_j$ will re-encode all these integers using new transmit directions, exactly as described in Section \ref{descriptiontv}.
More precisely, the transmit signal of relay $V_j$ at time $t+1$ will be given by
\al{ \label{transaligned} 
X_{V_j}[t+1] = \ge' \sum_{\vec s \in \Delta_{N+1}}  \til T_{\vec s} \; u_{j,\vec s} [t],
}
where $\ge'= \beta' P^{\frac{d-1+2\ep}{2(d+\ep)}}$, and $\beta'$ is chosen so that the output power constraint is satisfied (similar to $\beta$).
The new transmit directions $\til T_{\vec s}$ are defined exactly as before, according to (\ref{bdef}), (\ref{directionsfixed2}) and (\ref{tiltdef}).
Moreover, Claim \ref{claim1} still holds in this case, and the transmit signals can be equivalently written as
\al{ \label{transnotaligned}
X_{V_j}[t+1] = \ge' \sum_{\vec s \in \Delta_{N}}  \til T_{\vec s} \left( \sum_{i=1}^K b_{ij} \, c_{i,\vec s} [t] \right).
}

\vspace{4mm}

{\noindent \bf Decoding at the destinations:} 

\vspace{2mm}

\noindent In order to compute the received signals at the destinations, similar to (\ref{relaytransmitvector}), we first express the transmit signals at time $t$ in vector form, as 
\al{ \label{relaytransmitvector1}
\begin{bmatrix} X_{V_1}[t+1]  \\ \vdots \\ X_{V_K}[t+1] \end{bmatrix} = \gamma' \sum_{\vec s \in \Delta_{N}}  \til T_{\vec s}
\begin{bmatrix} b_{11} &  ... & b_{K1} \\  \vdots & \ddots & \vdots \\ 
b_{1K} & ... & b_{KK} \end{bmatrix}
\begin{bmatrix} c_{1,\vec s}[t] \\ \vdots \\ c_{K,\vec s}[t] \end{bmatrix}.
}
Then, similar to (\ref{matrixrec0}), we can obtain
\al{
\begin{bmatrix} Y_{D_1}[t+1] \\ \vdots \\ Y_{D_K}[t+1] \end{bmatrix} & = 
\begin{bmatrix} h_{V_1,D_1} & ... & h_{V_K,D_1} \\ \vdots & \ddots & \vdots \\ 
h_{V_1,D_K} & ... & h_{V_K,D_K} \end{bmatrix} 
\begin{bmatrix} X_{V_1}[t+1] \\ \vdots \\ X_{V_K}[t+1] \end{bmatrix} + \begin{bmatrix} Z_{D_1}[t+1] \\ \vdots \\ Z_{D_K}[t+1] \end{bmatrix} \nonumber \\ 
& = 
\begin{bmatrix} b_{11} & ... & b_{K1} \\ \vdots & \ddots & \vdots \\ 
b_{1K} & ... & b_{KK} \end{bmatrix}^{-1}
\begin{bmatrix} X_{V_1}[t+1] \\ \vdots \\ X_{V_K}[t+1] \end{bmatrix} + \begin{bmatrix} Z_{D_1}[t+1] \\ \vdots \\ Z_{D_K}[t+1] \end{bmatrix} \nonumber \\
& = \gamma' \sum_{\vec s \in \Delta_{N}}  \til T_{\vec s}[t]
\begin{bmatrix} c_{1,\vec s}[t] \\ \vdots \\ c_{K,\vec s}[t] \end{bmatrix} +\begin{bmatrix} Z_{D_1}[t+1] \\ \vdots \\ Z_{D_K}[t+1] \end{bmatrix}. \label{matrixrec}
}
Thus, the received signal at destination $D_j$ at time $t$ is simply given by
\al{ \label{destreceivedscalar}
Y_{D_j}[t+1] = \gamma' \sum_{\vec s \in \Delta_{N}}  \til T_{\vec s} \; c_{j,\vec s}[t] + Z_{D_j}[t+1].
}
The points in the (noiseless) received constellation at each destination, given by
\al{ \label{destconstellation}
\til \V = \left\{ \ge' \sum_{\vec s \in \Delta_{N}}  \tilde T_{\vec s} \; c_{\vec s} : c_{\vec s} \in  \U, \forall \; \vec s \in \Delta_{N}  \right\},
}
can also be uniquely mapped into tuples of integers due to the following claim.
\begin{claim} \label{claim2}
There exists a one-to-one map between points $v \in \til \V$ and tuples of integers $(c_{\vec s} : \vec s \in \Delta_{N})$ with entries in $\U$ such that $v = \ge' \sum_{\vec s \in \Delta_{N}} \til T_{\vec s} \; c_{\vec s}$.
\end{claim}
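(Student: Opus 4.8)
The plan is to reduce the claim to a $\mathbb{Q}$-linear independence statement for the functions $\til T_{\vec s}$, $\vec s \in \Delta_N$, and then exploit genericity of the channel gains. By definition $\til\V$ is the image of the box $\U^{|\Delta_N|}$ under the linear map $(c_{\vec s})_{\vec s\in\Delta_N}\mapsto \ge'\sum_{\vec s\in\Delta_N}\til T_{\vec s}\,c_{\vec s}$ (and the nonzero constant $\ge'$ is irrelevant), so it suffices to show this map is injective on all of $\Z^{|\Delta_N|}$. By linearity, that is equivalent to: whenever $a_{\vec s}\in\Z$ for $\vec s\in\Delta_N$ and $\sum_{\vec s\in\Delta_N}a_{\vec s}\,\til T_{\vec s}=0$, all $a_{\vec s}$ vanish. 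I would prove this holds for Lebesgue-almost every choice of the second-hop gains $\{h_{V_i,D_j}\}$, which is exactly the genericity hypothesis in Theorem \ref{thmc}.

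The core step is to show the real numbers $\{\til T_{\vec s}:\vec s\in\Delta_N\}$ are $\mathbb{Q}$-linearly independent for almost all $\{h_{V_i,D_j}\}$. By (\ref{directionsfixed2}), each $\til T_{\vec s}$ is the monomial $\prod_{i,j}b_{ij}^{s_{ij}}$ in the entries $b_{ij}$ of ${\bf H}_{V,D}^{-1}$, and distinct $\vec s\in\{0,\dots,N-1\}^{K^2}$ give distinct exponent vectors. The key observation is that the $b_{ij}$, viewed as rational functions of the $K^2$ variables $h_{V_i,D_j}$, are algebraically independent over $\mathbb{Q}$: since $\det({\bf H}_{V,D}^{-1})=1/\det{\bf H}_{V,D}$ is a polynomial in the $b_{ij}$ and $({\bf H}_{V,D}^{-1})^{-1}={\bf H}_{V,D}$, one gets $\mathbb{Q}(b_{ij}:i,j)=\mathbb{Q}(h_{V_i,D_j}:i,j)$ (matrix inversion is a birational involution), a field of transcendence degree $K^2$; a field of transcendence degree $K^2$ generated by $K^2$ elements has those elements algebraically independent. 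Hence distinct monomials in the $b_{ij}$ are $\mathbb{Q}$-linearly independent as elements of $\mathbb{Q}(h_{V_i,D_j})$, so for each fixed nonzero integer tuple $(a_{\vec s})$ the rational function $\sum_{\vec s}a_{\vec s}\prod_{i,j}b_{ij}^{s_{ij}}$ is not identically zero. Clearing the denominator $(\det{\bf H}_{V,D})^{K^2(N-1)}$ leaves a nonzero polynomial in the $h_{V_i,D_j}$, whose zero set, together with the locus $\{\det{\bf H}_{V,D}=0\}$, has Lebesgue measure zero. Taking the union over the countably many nonzero integer tuples, we conclude that outside a measure-zero set of channel realizations — one that does not depend on $P$ — the relation $\sum_{\vec s}a_{\vec s}\,\til T_{\vec s}=0$ with $a_{\vec s}\in\Z$ forces all $a_{\vec s}=0$, i.e.\ the map is injective.

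I expect the main obstacle to be the algebraic independence of the $b_{ij}$: a priori the entries of an inverse matrix could satisfy polynomial relations, so one must argue (via the birational-involution / transcendence-degree argument above, or equivalently by exhibiting an explicit specialization of the $h_{V_i,D_j}$ at which any given putative monomial relation fails) that they do not. The remaining pieces — the reduction to $\mathbb{Q}$-linear independence, clearing denominators, and the countable union of null sets — are routine. Finally, Claim \ref{claim1b} is proved the same way, with only two changes: the relevant quantities $T_{\vec s}=\prod_{i,j}h_{S_i,V_j}^{s_{ij}}$ are already monomials in the $K^2$ genuinely independent first-hop gains $h_{S_i,V_j}$ (so algebraic independence is immediate, with no inverse-matrix subtlety), and $\vec s$ ranges over $\Delta_{N+1}$ rather than $\Delta_N$; the bound on the integer entries can be taken uniform because the argument works for every finite box of integer coefficients simultaneously.
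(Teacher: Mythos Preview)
Your argument is correct and proves Claim \ref{claim2} as stated, but it follows a genuinely different route from the paper. The paper does not argue directly for $\mathbb{Q}$-linear independence of the real numbers $\til T_{\vec s}$. Instead, it first shows (Lemma \ref{analyticlemma}) that the $\til T_{\vec s}$, viewed as analytic functions of the $h_{V_i,D_j}$, are $\R$-linearly independent, and then invokes the Khintchine--Groshev-type result of Motahari et al.\ (Theorem~5 in \cite{MotahariRealInterference}) to obtain, for almost all channel realizations, a quantitative lower bound $\til d_{\min}\gtrsim P^{\ep/2}$ on the minimum distance of $\til\V$. Injectivity then drops out as a corollary of $\til d_{\min}>0$.

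Your transcendence-degree argument is more elementary and self-contained --- the birational-involution observation that $\mathbb{Q}(b_{ij})=\mathbb{Q}(h_{V_i,D_j})$ is a clean way to get algebraic independence of the $b_{ij}$, and it avoids the Diophantine machinery entirely. What it does \emph{not} give you, however, is any control on how small a nonzero integer combination $\sum_{\vec s}a_{\vec s}\til T_{\vec s}$ can be as a function of the size of the $a_{\vec s}$. In the paper, Claim \ref{claim2} is really a waypoint: the scheme needs the quantitative bound on $\til d_{\min}$ to drive the hard-decoding error probability $\exp(-\til\delta P^{\ep})$ to zero, and that bound comes from exactly the same application of \cite{MotahariRealInterference} that yields the claim. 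So while your proof is a valid and arguably cleaner proof of the bare injectivity statement, in the context of Section~\ref{analysissec} you would still need the paper's machinery (or some substitute Diophantine estimate) for the performance analysis, and at that point the claim would again follow for free.
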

\noindent Therefore, at each time $t=2,...,n$, destination $D_i$ will first map its received signal to the nearest point in $\til \V$ and then use the one-to-one map between points in $\til V$ and tuples $(c_{\vec s} : \vec s \in \Delta_N)$ with entries in $\U$ to obtain the $L$ integers $c_{i,\vec s}$ encoded by source $S_i$ at time $t-1$.
At time $n$, destination $D_i$ has decoded $L$ data streams of $n$ integers each (in fact, $n-1$ integers, since the integers encoded by the destination at time $t=n$ do not arrive at the destination within the length-$n$ block), and it applies an individual typicality-based decoder to each of these streams to decode the original source message $W_i$.


\subsection{Performance Analysis of AND for constant channels} \label{analysissec}

\vspace{2mm}

Next we show that AND for constant channels can in fact achieve $K$ degrees of freedom.
In order to do that, we first need to bound the error probability of the hard-decoding operations at the relays and destinations.
In the process of doing that, we prove Claims \ref{claim1b} and \ref{claim2}.

\vspace{4mm}

{\noindent \bf Error probability of relaying operations:} 

\vspace{2mm}

\noindent To bound the error probability of the relaying operations, we need to find a lower bound on the minimum distance between two points in the received constellation $\V$, described in (\ref{relayconstellation}).
Since the directions $T_{\vec s}$, for $\vec s \in \Delta_{N+1}$, are all distinct monomials of the channel gains of the first hop, they can be viewed as analytic functions of $h_{S_i,V_j}$, for $1 \leq i,j \leq K$, that are linearly independent over the reals.
Moreover, the distance between any two points in $\V$ has the form
\aln{
\ge \sum_{\vec s \in \Delta_{N+1}}  T_{\vec s} \; u_{\vec s},
}
where each $u_{\vec s}$ can take values in $\Z \cap \left[ -2 K P^{\frac{1-\ep}{2(d+\ep)}} , 2 K  P^{\frac{1-\ep}{2(d+\ep)}} \right]$.
Thus, we can apply Theorem 5 in \cite{MotahariRealInterference} (see also its subsequent remarks and inequality (8) in particular) to conclude that, for almost all values of the channel gains, there exists a constant $\kappa$, independent of $P$, such that the minimum distance of $\V$ satisfies
\aln{
d_{\min } > \ge \frac{\kappa}{\left(2 K P^{\frac{1-\ep}{2(d+\ep)}}\right)^{|\Delta_{N+1}|-1+\ep}}.
}
By choosing $d = |\Delta_{N+1}| = (N+1)^{K^2}$, we have
\al{ \label{mindist}
d_{\min } > \frac{\kappa \beta P^{\frac{d-1+2\ep}{2(d+\ep)}}}{(2 K)^{d-1+\ep} P^{\frac{(1-\ep)(d-1+\ep)}{2(d+\ep)}}} = 
\frac{\kappa \beta}{(2 K)^{d-1+\ep}} P^{\ep/2}.
}
The fact that the minimum distance between any two points in $\V$ is strictly positive implies that there exists a one-to-one map between points $v \in \V$ and tuples of integers $(u_{\vec s} : \vec s \in \Delta_{N+1})$ with entries in $\Z \cap \left[ -K  P^{\frac{1-\ep}{2(d+\ep)}} , K  P^{\frac{1-\ep}{2(d+\ep)}} \right]$, thus proving Claim \ref{claim1b}.
Therefore, after mapping its received signal to the nearest point in $\V$, relay $V_j$ can in fact decode each $u_{j,\vec s}$, $\vec s \in \Delta_{N+1}$, using this one-to-one map.
This procedure will correctly decode each $u_{j,\vec s}$, provided that $|Z_{V_j}[t]| < d_{\min }/2$, implying that the probability of error for relay $V_j$ is at most
\al{
\Pr(|Z_{V_j}[t]| \geq d_{\min }/2) & = 2 \, Q\left( \frac{d_{\min }}{2 \sigma} \right) \nonumber \\
& \leq \exp \left( - \frac{d_{\min }^2}{8 \sigma^2} \right) \nonumber \\
& = \exp (-\delta P^\ep), \label{errorprob}
}
where $\delta$ is a positive contant that is independent of $P$.

\vspace{4mm}

{\noindent \bf Error probability of symbol decoding at the destinations:} 

\vspace{2mm}

\noindent Similar to what we did for the received signals at the relays, we would like to lower bound the minimum distance between two points in the destinations (noiseless) received constellation $\tilde \V$, given in (\ref{destconstellation}).
The following lemma, whose proof we present in the appendix, allows us to use Theorem 5 from \cite{MotahariRealInterference} as we did before.
\begin{lemma} \label{analyticlemma}
The received directions at the destinations, $\til T_{\vec s}$, for $\vec s \in \Delta_{N}$, are analytic functions of $h_{V_i,D_j}$, $1 \leq i,j \leq K$, that are linearly independent over the reals.
\end{lemma}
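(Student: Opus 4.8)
The plan is to treat the two assertions of Lemma~\ref{analyticlemma}---analyticity and linear independence over $\R$---separately, in both cases using the explicit formula $\til T_{\vec s} = \prod_{1 \leq i,j \leq K} b_{ij}^{s_{ij}}$ from (\ref{directionsfixed2}) together with the defining relation (\ref{bdef}) that identifies the $b_{ij}$ with the entries of $\mathbf{H}_{V,D}^{-1}$.

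For analyticity I would invoke Cramer's rule: each $b_{ij}$ equals a signed minor of $\mathbf{H}_{V,D}$ divided by $\det \mathbf{H}_{V,D}$, hence is a rational function of the $K^2$ real variables $h_{V_i,D_j}$ whose only singularities lie on $\{\det \mathbf{H}_{V,D} = 0\}$. Consequently each $b_{ij}$ is real-analytic on the open, full-measure set $\{\det \mathbf{H}_{V,D} \neq 0\}$ on which the scheme operates, and $\til T_{\vec s}$, being a fixed monomial in the $b_{ij}$, is analytic there as well.

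The heart of the lemma is the linear independence, and the key structural fact is that matrix inversion $\mathbf{H} \mapsto \mathbf{H}^{-1}$ is a birational involution on $K \times K$ matrices: it maps the Zariski-dense open set $\mathrm{GL}_K$ bijectively onto itself and is its own inverse. Suppose $\sum_{\vec s \in \Delta_N} \lambda_{\vec s}\, \til T_{\vec s} \equiv 0$ with real coefficients $\lambda_{\vec s}$. Given an arbitrary invertible $\mathbf{G} = [g_{ij}]$, substitute $\mathbf{H}_{V,D} = \mathbf{G}^{-1}$; then $\mathbf{H}_{V,D}^{-1} = \mathbf{G}$, so by (\ref{bdef}) each $b_{ij}$ evaluates to the corresponding entry $g_{ij}$, whence $\til T_{\vec s}$ evaluates to $\prod_{i,j} g_{ij}^{s_{ij}}$. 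The identity becomes $\sum_{\vec s \in \Delta_N} \lambda_{\vec s} \prod_{i,j} g_{ij}^{s_{ij}} = 0$ for every $\mathbf{G} \in \mathrm{GL}_K$; since $\mathrm{GL}_K$ is Zariski-dense, the left-hand side must be the zero polynomial in the $K^2$ indeterminates $g_{ij}$. Because distinct tuples $\vec s \in \Delta_N = \{0,\dots,N-1\}^{K^2}$ yield distinct monomials $\prod_{i,j} g_{ij}^{s_{ij}}$, which are linearly independent in $\R[g_{11},\dots,g_{KK}]$, we get $\lambda_{\vec s} = 0$ for all $\vec s$, which is the claimed linear independence.

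I expect the only delicate point to be the legitimacy of the reparametrization $h_{V_i,D_j} \mapsto g_{ij}$, and this is exactly what the birationality of matrix inversion provides: an identity holding for all $\mathbf{H}_{V,D} \in \mathrm{GL}_K$ is equivalent, under $\mathbf{H}_{V,D} = \mathbf{G}^{-1}$, to the corresponding identity for all $\mathbf{G} \in \mathrm{GL}_K$. Algebraically, one may phrase this by observing that the $b_{ij}$ generate the full rational function field $\R(h_{V_i,D_j})$ and are therefore algebraically independent over $\R$, so $g_{ij} \mapsto b_{ij}$ extends to an $\R$-isomorphism of function fields under which linearly independent monomials stay linearly independent. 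Everything else is routine, and the analyticity part is immediate, so the whole argument is short once this substitution is in place.
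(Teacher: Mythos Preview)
Your proposal is correct and follows essentially the same approach as the paper: analyticity via Cramer's rule, and linear independence by exploiting that $\mathbf H\mapsto\mathbf H^{-1}$ is a bijection of $\mathrm{GL}_K$ onto itself, so that the $b_{ij}$ may be treated as free parameters in which the $\til T_{\vec s}$ are distinct monomials. The paper phrases the second step as a contradiction obtained by picking a single invertible $B$ at which the nontrivial polynomial $\sum_{\vec s}\alpha_{\vec s}\til T_{\vec s}(b_{ij})$ does not vanish and then setting $H=B^{-1}$, whereas you argue globally via Zariski-density; the logical content is the same.
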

Theorem 5 from \cite{MotahariRealInterference} now implies that the minimum distance $\til d_{\min }$ between any two points in $\til \V$ can be lower-bounded as
\aln{
\til d_{\min } > \ge' \frac{\til \kappa}{\left(2  P^{\frac{1-\ep}{2(d+\ep)}}\right)^{|\Delta_{N}|-1+\ep}}.
}
for some constant $\til \kappa$ (which is independent of $P$).
Since $d = |\Delta_{N+1}| > |\Delta_{N}|$, for $P > 1$, we have
\al{ \label{mindist}
\til d_{\min } > \frac{\til \kappa \beta' P^{\frac{d-1+2\ep}{2(d+\ep)}}}{2^{|\Delta_N|-1+\ep} P^{\frac{(1-\ep)(d-1+\ep)}{2(d+\ep)}}} = 
\frac{\tilde \kappa \beta'}{2^{|\Delta_N|-1+\ep}} P^{\ep/2}.
}
The fact that the minimum distance between any two points in $\til \V$ is strictly positive implies that there exists a one-to-one map between points $v \in \til \V$ and tuples of integers $(c_{\vec s} : \vec s \in \Delta_{N})$ with entries in $\Z \cap \left[ - P^{\frac{1-\ep}{2(d+\ep)}} ,  P^{\frac{1-\ep}{2(d+\ep)}} \right]$, thus proving Claim \ref{claim2}.
After mapping its received signal to the nearest point in $\til \V$, destination $D_j$ can in fact decode each $c_{j,\vec s}$, $\vec s \in \Delta_N$, using this one-to-one map.
As in (\ref{errorprob}), the probability that $D_i$ incorrectly decodes these integers (provided that no relay made an error in the previous step) is at most
\al{
\Pr(|Z_{D_j}[t]| \geq \til d_{\min }/2) & = \exp (-\til \delta P^\ep), \label{errorprob2}
}
for some constant $\til \delta > 0$.

\vspace{6mm}


{\noindent \bf Achievable rates:} 

\vspace{2mm}

\noindent To determine the rate of our original codebook, we first notice that each data stream between $S_i$ and $D_i$ effectively creates a discrete memoryless channel with input and output alphabets $\U$ and an error probability which can be upper bounded as
\al{
P_e & \leq 1 - \left(1-  \exp (-\delta P^\ep)\right)^K \left(1- \exp (-\til \delta P^\ep)\right) \nonumber \\
& \leq  1 - \left(1- \exp (-\delta' P^\ep)\right)^{K+1} \nonumber \\
& \leq (K+1) \exp (-\delta' P^\ep), \label{channelerror} 
}
where $\delta' = \min(\delta, \til \delta)$.
This allows us to lower bound the mutual information between the input $U$ and the output $\hat U$ of this channel, for a uniform distribution over the input alphabet. 
Using Fano's inequality, we have
\aln{
I(U;\hat U) & \geq H(U) - H(U|\hat U) \\
& \geq \log|\U| - (1+P_e \log|\U|) \\
& = (1-P_e)\log|\U| - 1 \\
& \geq \left(1-(K+1) \exp(-\delta' P^\ep)\right) \\ 
& \quad \quad \quad \left(\frac{1-\ep}{d+\ep} \frac{\log P}2 + 1 \right) - 1, 
}
and we can achieve rate \aln{R = \left(1-(K+1) \exp(-\delta' P^\ep)\right) \hspace{-1mm} \left(\frac{1-\ep}{d+\ep} \frac{\log P}2 + 1 \right)\hspace{-1mm} - 1} over each data stream, by having our original codebook have $2^{nR}$ codewords.
This means that each data stream can achieve 
\aln{
\lim_{P \to \infty} \frac{R}{\tfrac12 \log P} = \frac{1-\ep}{d+\ep} = \frac{1-\ep}{(N+1)^{K^2}+\ep}
}
degrees of freedom.
Since each source transmits $L = |\Delta_{N}| = N^{K^2}$ data streams, each source-destination pair achieves a total of 
\aln{
\frac{(1-\ep)N^{K^2}}{(N+1)^{K^2}+\ep} \geq \frac{(1-\ep)N^{K^2}}{(1+\ep)(N+1)^{K^2}} = \frac{1-\ep}{1+\ep}\left( \frac{N}{N+1} \right)^{K^2}
}
degrees of freedom, for any large $N$ and any small $\ep > 0$, implying that each source-destination pair can achieve arbitrarily close to one degree of freedom.
We conclude that the aligned network diagonalization scheme can achieve arbitrarily close to $K$ degrees of freedom for almost all values of the channel gains, which proves Theorem \ref{thmc}.

\section{Two-Hop Networks with MIMO Nodes}

In this section, we use the result from Theorem \ref{thmtv} in order to characterize the degrees of freedom of two-hop networks where we still have full connectivity at each hop, but each node (sources, relays and destinations) is allowed to have multiple antennas.
In general, we want to focus on a $K \times A \times K$ network, where each node $i \in V$ has $M_i$ (full-duplex) antennas.
For simplicity of exposition,  we will focus on the case of time-varying channels.
However, it should be clear that the results in this section can also be obtained in the case of constant channels, by extending Theorem \ref{thmc} instead.

It is obvious that, in this setting, for certain choices of the number of antennas at each node, it may not be optimal to assign the same number of degrees of freedom to each source-destination pair, as was the case when $M_i = 1$ for all $i$.
Therefore, in this section, instead of focusing on the sum degrees of freedom, we will instead consider the degrees-of-freedom region.

\begin{definition} \label{defn_region}
The degrees-of-freedom region of a $K \times A \times K$ wireless network
is given by
\begin{align}
\vspace{2mm}
\D &= \left\{(d_1,...,d_K) \in \R^K_+ \st \forall \, w_1,...,w_K \in \R_+, w_1 d_1 +...+ w_K d_K \right. \nonumber \\ 
& \left. 
\leq \lim_{P \goesto \infty}\left( \sup_{(R_1,...,R_K) \in C(P)}{\frac{w_1 R_1 + ...+ w_K R_K}{\frac12 \log P}}\right) \right\}.
\end{align} 
\end{definition}

While the formal definition is technical, the degrees-of-freedom region can be intuitively understood as a high-SNR approximation to the capacity region, scaled down by $\frac12 \log P$. 
The sum degrees of freedom $d_{\Sigma}$ from Definition \ref{dofdefn} is simply the point in $\D$ that maximizes the (unweighted) sum of its components.
For two-hop networks with MIMO nodes we then have the following result.

\begin{theorem}
For a $K \times A \times K$ wireless network with time-varying channels where each node $i$ has $M_i$ antennas, the degrees-of-freedom region comprises all nonnegative $K$-tuples $(d_1,...,d_K)$ satisfying
\al{
& \sum_{i=1}^K d_i \leq M_{\text{relays}} \label{thm3a} \\
& d_i \leq \min \left[ M_{S_i}, M_{D_i} \right], \quad \text{ for $i=1,...,K$}, \label{thm3b}
}
where $M_{\text{relays}} = \sum_{i=1}^A M_{V_i}$ is total number of antennas at the relays.
\end{theorem}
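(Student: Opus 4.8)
The plan is to prove the two families of inequalities form an \emph{outer} bound by elementary cut-set / genie arguments, and to prove they form an \emph{inner} bound by reducing to Theorem~\ref{thmtv} after splitting every MIMO node into single-antenna virtual nodes.

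For the converse, fix an achievable tuple $(d_1,\dots,d_K)$. To get \eqref{thm3b}, hand every node the genie side information $\{W_k : k\neq i\}$; what remains is the single-unicast two-hop MIMO relay channel $S_i \to \{V_1,\dots,V_A\} \to D_i$, and the cut isolating $S_i$ (resp.\ $D_i$) bounds its sum-DoF by the DoF of an $M_{S_i}\times M_{\mathrm{relays}}$ (resp.\ $M_{\mathrm{relays}}\times M_{D_i}$) point-to-point MIMO channel, giving $d_i \le \min(M_{S_i},M_{D_i})$. For \eqref{thm3a}, apply the cut-set bound with all sources on one side and everything else on the other: since the network is two-hop, the destinations hear nothing directly from the sources, so the cross-channel is the $M_{\mathrm{sources}}\times M_{\mathrm{relays}}$ matrix $H_{S,V}$, whose rank — hence the sum-DoF — is at most $\min(M_{\mathrm{sources}},M_{\mathrm{relays}}) \le M_{\mathrm{relays}}$. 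That time-varying channels with finite second moments leave these $\min(\cdot,\cdot)$ MIMO DoF values unchanged is standard. Hence every achievable tuple lies in the stated region.

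For achievability, I would first observe that the stated region $\D$ is a bounded polyhedron all of whose vertices are integer tuples: at a vertex either $\sum_i d_i \le M_{\mathrm{relays}}$ is slack, in which case every coordinate is $0$ or $\min(M_{S_i},M_{D_i})$, or it is tight, in which case all but one coordinate are $0$ or $\min(M_{S_i},M_{D_i})$ and the last is the integer $M_{\mathrm{relays}}-(\text{sum of the others})$. Since $\D$ is the convex hull of these integer vertices and the achievable DoF region is closed under time-sharing, it suffices to achieve each integer $(d_1,\dots,d_K)\in\D$. Given such a tuple, set $K'\defi\sum_i d_i\le M_{\mathrm{relays}}$ and form a virtual $K'\times K'\times K'$ two-hop network: split $S_i$ into $d_i$ single-antenna virtual sources (possible as $d_i\le M_{S_i}$), split $D_i$ into $d_i$ single-antenna virtual destinations (possible as $d_i\le M_{D_i}$), and designate any $K'$ of the $M_{\mathrm{relays}}\ge K'$ relay antennas as single-antenna virtual relays, with all unused antennas silent. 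The induced network is fully connected and its channel gains are a sub-collection of the original antenna-to-antenna gains, hence still mutually independent, i.i.d., absolutely continuous with finite second moments, so Theorem~\ref{thmtv} gives one degree of freedom to each of the $K'$ virtual pairs; regrouping the $d_i$ virtual pairs belonging to user $i$ yields exactly $d_i$ degrees of freedom for $(S_i,D_i)$.

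The only point that genuinely needs care — and the one I expect to be the crux — is checking that a physical relay $V_j$ with $M_{V_j}>1$ antennas can truly emulate several independent single-antenna relays running Aligned Network Diagonalization, so that the SISO scheme on the virtual network is implementable on the MIMO network. This works because the relay operations in Section~\ref{descriptiontv} act separately on each scalar received stream (block inversion of ${\bf T}[m]$, forming the estimates $\hat u_{j,\vec s}$, and remodulating onto the new directions $\tilde T_{\vec s}$ are all per-virtual-relay), and the only inter-relay coordination the scheme requires — jointly staying silent on a block when a determinant is below threshold — depends only on the common channel state, which is certainly available to a node owning several antennas. Everything else is polytope-vertex bookkeeping and two one-line cut bounds; an analogous argument with Theorem~\ref{thmc} in place of Theorem~\ref{thmtv} handles the constant-channel case.
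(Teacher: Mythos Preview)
Your proposal is correct and follows essentially the same route as the paper: cut-set bounds for the converse, and for achievability, viewing each antenna as a separate single-antenna node, discarding the surplus, and invoking Theorem~\ref{thmtv} on the resulting $K'\times K'\times K'$ network. Your treatment is in fact more careful than the paper's on one point: the paper tacitly applies the antenna-discarding step to an arbitrary tuple in the region, whereas you correctly restrict to integer tuples by first checking that the polytope has integer vertices and then invoking time-sharing; your discussion of why a multi-antenna relay can faithfully emulate several independent single-antenna AND relays is also an explicit justification the paper omits.
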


Once again, the converse part of this Theorem is obtained from the cut-set bound in a straigthforward manner.
Moreover, given Theorem \ref{thmtv}, the achievability is also easily obtained.
More precisely, for any degrees-of-freedom tuple $(d_1,...,d_K)$ satisfying (\ref{thm3a}), we first discard $M_{\text{relays}} - \sum_{i=1}^K d_i$ out of the total relay antennas. 
Moreover, since $d_i \leq \min \left[ M_{S_i}, M_{D_i} \right]$ from (\ref{thm3b}), we can discard $M_{S_i} - d_i$ out of the source antennas and $M_{D_i} - d_i$ out of the destination antennas.
Then, if we view all remaining antennas as separate nodes, we obtain a $K' \times K' \times K'$ wireless network with time-varying channels, where $K' = \sum_{i=1}^K d_i$.
It is then clear that, by applying Theorem \ref{thmtv}, we can achieve $\sum_{i=1}^K d_i$ sum degrees of freedom on this network, which corresponds to the degrees-of-freedom tuple $(d_1,...,d_K)$ in the original network.

The most interesting aspect of this result is the fact that the cooperation that is allowed among the antennas due to the MIMO setting does not improve the degrees of freedom that can be achieved in the case that all the antennas are viewed as separate nodes.
Notice, however, that this cooperation among antennas may have the power to simplify the schemes we presented in Sections \ref{descriptiontv} and \ref{descriptionc}.


\newpage

\section{Conclusion}

In this work, we showed that, $\kkk$ wireless networks have $K$ degrees of freedom both in the case of time-varying channel coefficients and in the case of constant channel coefficients (in which case the result holds for almost all values of constant channel coefficients).
This result is surprising due to the fact that, in a $\kkk$ wireless network, each destination is subject to interference originated at $K-1$ sources.
Thus, the total number of interference signals that need to be neutralized for $K$ degrees of freedom to be achieved is $O(K^2)$, while the number of variables under our control (i.e., the encoding rules at the sources and the relaying operations) is only $O(K)$.
Moreover, this result answers the important conceptual question of whether global interference management can significantly outperform hop-by-hop interference management.
As we showed, global interference management techniques have the potential to effectively remove all the interference experienced by the destinations in two-hop wireless networks.
The same objective cannot be achieved if we restrict ourselves to hop-by-hop interference management.


Our main result is proven through the introduction of a new coding strategy, which we call Aligned Network Diagonalization.
In the case of time-varying channels in particular, the scheme can be seen as providing an affirmative answer to a linear diagonalization question: is it possible to apply linear transformations at the sources, relays and destinations, so that the overall linear tranformation of the network is diagonal?
The main idea of the scheme lies in the operations performed by the relays.
These operations can be understood as modifying the received signals at the relays so that it ``looks like'' the transfer matrix of the first hop is the inverse of the transfer matrix of the second hop.
This way, we can effectively diagonalize the network, creating parallel interference-free channels from each source to its corresponding destination. 
These interference-free channels allow each source-destination pair to achieve arbitrarily close to one degree of freedom.

While our results imply the tightness of the cut-set bound, it is important to point out that this is likely to be the case only for degrees of freedom. 
For example, this occurs for the two-user interference channel, where the degrees-of-freedom cut-set bound of $1$ is trivially tight, but, as shown in \cite{ETW}, tighter outer bounds can be obtained through genie-aided arguments and by considering distinct interference regimes.
Thus, one would expect that more sophisticated outer bounds can be developed for $\kkk$ wireless network as well.
In this sense, a promising direction is to consider a deterministic model of the $\kkk$ wireless network.
Deterministic models of wireless networks have been proven useful in the study of the capacity of both multi-hop single-flow networks \cite{ADTJ09} and single-hop multi-flow networks \cite{ElGamalCosta,BreslerTseEuro}.
Not only do they usually provide new insights about the original stochastic problem, but they can in fact be shown, in several cases, to approximate well the capacity of their non-deterministic (usually AWGN) counterparts.
A step towards studying $\kkk$ wireless networks under deterministic models was taken in \cite{deterministickkk}.
By using the worst-case noise result from \cite{wcnoisefull}, it was shown that the capacity region of an AWGN $\kkk$ wireless network is a subset of the capacity region of the same network under the truncated deterministic model \cite{ADTJ09} (where nodes are given slightly more power). 
This fact is particularly interesting because it allows us to look for outer bounds on the capacity region of the AWGN $\kkk$ wireless network by focusing on the truncated deterministic channel model, which is expected to reveal  combinatorial structures of the problem that are not apparent in the AWGN setting.

Other directions for future work include a study of the time expansion required for our proposed scheme to be performed.
In particular, we notice that the linear version of AND relies on the fast variation of the channel gains in the network and requires a large number of distinct channel realization in order to achieve close to one degree of freedom per user.
If we limit the available time (and space) diversity, as considered for instance in \cite{bresler3user}, it is not clear if the same gains achieved by AND can be obtained. 
Of particular interest is the tradeoff between achievable degrees of freedom and the amount of channel diversity available.
Furthermore, one would expect this tradeoff to be strongly affected by the addition of extra relays to the network.
Thus we would like to know if AND can be generalized to take advantage of a larger number of relays, which can provide more spatial diversity, in order to reduce its requirement for time diversity.

\section{Acknowledgements}
The research of A.~S.~Avestimehr and I.~Shomorony was supported in part by the NSF CAREER award 0953117, NSF Grants CCF-1144000 and CCF-1161720, a research grant from Samsung Advanced Institute of Technology (SAIT), and the U.S. Air Force Young Investigator Program award FA9550-11-1-0064.

\newpage

\appendices \label{appsection}




\section{Proof of Lemma \ref{invertiblelemma}} \label{appinvertible}
%

\begin{lemmarep}{\ref{invertiblelemma}}
Let $\vec p (x_1,...,x_K) = \left[ p_1 (x_1,...,x_K),...,  p_d (x_1,...,x_K) \right]^\dagger$, where each $p_i(x_1,...,x_K)$ is a distinct monomial on the variables $x_1,...,x_K$.
Then, the determinant of the $d \times d$ matrix
\aln{
\left[ \vec p (x_{1,1},...,x_{1,K}), \vec p (x_{2,1},...,x_{2,K}), ..., \vec p (x_{d,1},...,x_{d,K}) \right]
}
is a non-identically zero polynomial on the variables $x_{1,1},...,x_{1,K},...,x_{d,1},...,x_{d,K}$.
\end{lemmarep}

\vspace{2mm}

\begin{proof}
Obviously, the determinant of $\left[ \vec p (x_{1,1},...,x_{1,K}), \vec p (x_{2,1},...,x_{2,K}), ..., \vec p (x_{d,1},...,x_{d,K}) \right]$ is a polynomial on the variables $x_{1,1},...,x_{1,K},...,x_{d,1},...,x_{d,K}$.
To show that it is non-identically zero, we just need to show that, for some choice of $x_{1,1},...,x_{1,K},...,x_{d,1},...,x_{d,K}$, the determinant is nonzero.
We do this by showing inductively that we can choose $x_{1,1},...,x_{1,K}$, then $x_{2,1},...,x_{2,K}$ and so on, so that, when we choose $x_{j,1},...,x_{j,K}$, the column $\vec p (x_{j,1},...,x_{j,K})$ is linearly independent from $\vec p (x_{1,1},...,x_{1,K}),...,\vec p (x_{j-1,1},...,x_{j-1,K})$.
The base case is trivial.
Fix any $j \in \{2,...,K\}$, and suppose $x_{1,1},...,x_{1,K},...,x_{j-1,1},...,x_{j-1,K}$ have been chosen such that the linear space spanned by $\vec p (x_{1,1},...,x_{1,K}),...,\vec p (x_{j-1,1},...,x_{j-1,K})$, $L$, has dimension $j-1$.
Since $j-1 < d$, there must be constants $\alpha_1,...,\alpha_d$ (not all zero) such that, for any $(y_1,..,y_d) \in L$, $\sum_{i=1}^d \alpha_i y_i = 0$.
But since each $p_i(x_1,...,x_K)$ is a distinct monomial on the variables $x_1,...,x_K$, $\sum_{i=1}^d \alpha_i p_i(x_1,...,x_K)$ is not identically zero.
Thus, we can choose $x_{j,1},...,x_{j,K}$, such that $\sum_{i=1}^d \alpha_i p_i(x_{j,1},...,x_{j,K}) \ne 0$, which implies that $\vec p (x_{j,1},...,x_{j,K}) \notin L$, completing the proof.
\end{proof}

\section{Proof of Lemma \ref{analyticlemma}}

\begin{lemmarep}{\ref{analyticlemma}}
The received directions at the destinations, $\til T_{\vec s}$, for $\vec s \in \Delta_{N}$, are analytic functions of $h_{V_i,D_j}$, $1 \leq i,j \leq K$, that are linearly independent over the reals.
\end{lemmarep}

\begin{proof}
To prove that each $\til T_{\vec s}$, for $\vec s \in \Delta_{N}$, is an analytic function of $h_{V_i,D_j}$, $1 \leq i,j \leq K$, we notice that
if we let
\aln{
H = \begin{bmatrix} h_{V_1,D_1} & h_{V_2,D_1} & ... & h_{V_K,D_1} \\ h_{V_1,D_2} & h_{V_2,D_2} & ... & h_{V_K,D_2} \\ \vdots & \vdots & \ddots & \vdots \\ 
h_{V_1,D_K} & h_{V_2,D_K} & ... & h_{V_K,D_K} \end{bmatrix},
}
then, for $1 \leq i,j \leq K$ we can write $b_{ij} = \frac{C_{ij}}{\det H}$, where $C_{ij}$ is the cofactor of the $(i,j)$ entry of $H$.
This means that each $b_{ij}$ is a ratio of two polynomials with $h_{V_i,D_j}$, $1 \leq i,j \leq K$, as variables.
Since each $\til T_{\vec s}$ is a distinct monomial of the $b_{ij}$s, it is clear that each $\til T_{\vec s}$ is an analytic function of $h_{V_i,D_j}$, $1 \leq i,j \leq K$.

Next, suppose by contradiction that $\til T_{\vec s}$, for $\vec s \in \Delta_N$, are not linearly independent over the reals.
Then there must be real numbers $\alpha_{\vec s}$, for $\vec s \in \Delta_N$, not all zero, such that 
\aln{
\sum_{\vec s \in \Delta_N} \alpha_{\vec s} \til T_{\vec s} = 0
}
for all values of $h_{V_i,D_j}$, for $1 \leq i,j \leq K$.
However, since the $\til T_{\vec s}$, for $\vec s \in \Delta_N$ are distinct monomials of the $b_{ij}$s, we have that, for almost all values of the $b_{ij}$s, $\sum_{\vec s \in \Delta_N} \alpha_{\vec s} \til T_{\vec s} \ne 0$.
Since for almost all values of the $b_{ij}$s, the matrix
\aln{
B = \begin{bmatrix} b_{11} & b_{21} & ... & b_{K1}  \\ b_{12} & b_{22} & ... & b_{K2} 
\\ \vdots & \vdots & \ddots & \vdots \\ 
b_{1K} & b_{2K} & ... & b_{KK} \end{bmatrix}
}
is invertible, we can find $b_{11}, b_{12},..., b_{KK}$ for which $B$ is invertible and $\sum_{\vec s \in \Delta_N} \alpha_{\vec s} \til T_{\vec s} \ne 0$ (with the $\til T_{\vec s}$\,s seen as functions of the $b_{ij}$s).
But this means that if we choose the values of $h_{V_i,D_j}$, for $1 \leq i,j \leq K$, by setting $H = B^{-1}$, we will have $\sum_{\vec s \in \Delta_N} \alpha_{\vec s} \til T_{\vec s} \ne 0$ (with the $\til T_{\vec s}$\,s seen as functions of the $h_{V_i,D_j}$s), which is a contradiction.
\end{proof}

\bibliographystyle{unsrt}



\end{document}